\documentclass[11pt]{article} 

\usepackage[margin=1in]{geometry} 
\usepackage[utf8]{inputenc}
\usepackage[T1]{fontenc}
\usepackage{mathpazo} 
\usepackage{courier} 
\usepackage{microtype}
\ifdefined\pdfinfoomitdate
\fi
\ifdefined\pdfsuppressptexinfo
\fi
\ifdefined\pdftrailerid
  \pdftrailerid{}
\fi

\usepackage{comment,xspace}
\usepackage[dvipsnames]{xcolor} 
\usepackage{booktabs}

\usepackage{enumitem} 
\definecolor{ptblue}{RGB}{15,76,129} 
\definecolor{ptemerald}{HTML}{009473} 
\definecolor{ptgray}{HTML}{939597} 
\definecolor{cobalt}{rgb}{0.0, 0.28, 0.67}
\usepackage{amsmath,amsfonts,amssymb,amsthm,mathtools}
\let\OLDforall\forall
\renewcommand{\forall}{\;\OLDforall\:}
\let\OLDexists\exists
\renewcommand{\exists}{\;\OLDexists\,}

\DeclareMathOperator*{\argmax}{arg\,max}

\usepackage{array}

\usepackage[square,numbers,sort]{natbib}
\usepackage{hyperref}
\usepackage{aliascnt}
\hypersetup{
linktocpage,
colorlinks=true,
citecolor=cobalt, 
urlcolor=ptblue, 
linkcolor=Plum, 
pdfcreator={},
pdfproducer={},
}
\usepackage{cleveref} 
\usepackage{crossreftools} 
\theoremstyle{plain}
\newtheorem{theorem}{Theorem}[section]

\newaliascnt{conjecture}{theorem}
\newtheorem{conjecture}[conjecture]{Conjecture}
\aliascntresetthe{conjecture}

\newaliascnt{openproblem}{theorem}

\aliascntresetthe{openproblem}

\newaliascnt{corollary}{theorem}
\newtheorem{corollary}[corollary]{Corollary}
\aliascntresetthe{corollary}

\newaliascnt{lemma}{theorem}

\aliascntresetthe{lemma}

\newaliascnt{proposition}{theorem}
\newtheorem{proposition}[proposition]{Proposition}
\aliascntresetthe{proposition}

\theoremstyle{definition}
\newaliascnt{definition}{theorem}
\newtheorem{definition}[definition]{Definition}
\aliascntresetthe{definition}

\newaliascnt{example}{theorem}

\aliascntresetthe{example}

\newaliascnt{observation}{theorem}
\newtheorem{observation}[observation]{Observation}
\aliascntresetthe{observation}

\theoremstyle{remark}
\newtheorem*{remark}{\upshape\bfseries Remark}

\crefname{theorem}{Theorem}{Theorems}
\crefname{conjecture}{Conjecture}{Conjectures}
\crefname{openproblem}{Open Problem}{Open Problems}
\crefname{corollary}{Corollary}{Corollaries}
\crefname{lemma}{Lemma}{Lemmas}
\crefname{proposition}{Proposition}{Propositions}
\crefname{definition}{Definition}{Definitions}
\crefname{example}{Example}{Examples}
\crefname{observation}{Observation}{Observations}

\makeatletter 
\newcommand{\EF}[1]{\if\relax\detokenize\expandafter{\@firstofone#1{}}\relax \text{EF}\xspace\else \text{EF#1}\fi}
\makeatother
\newcommand{\EFone}{\EF{1}\xspace}
\newcommand{\EFX}{\EF{X}\xspace}

\newcommand{\PO}{\text{PO}\xspace}
\newcommand{\wPO}{\text{weak-PO}\xspace}
\newcommand{\fPO}{\text{fPO}\xspace}
\newcommand{\EFonePO}{\textnormal{EF1+PO}\xspace}
\newcommand{\EFonefPO}{\textnormal{EF1+fPO}\xspace}

\newcommand{\NSW}{\text{NSW}\xspace}

\newcommand{\GS}{\text{Gross Substitutes}\xspace}
\newcommand{\CWMR}{\text{Common Weight Matroid Rank}\xspace}
\DeclareRobustCommand{\CWMR}{%
  \ifmmode \mathrm{CWMR}\else Common Weighted Matroid Rank\fi\xspace}
\newcommand{\WMR}{\text{Weighted Matroid Rank}\xspace}

\newcommand{\RR}{\mathbb{R}}
\newcommand{\cC}{\mathcal{C}}

\newcommand{\M}{\mathcal M}
\newcommand{\I}{\mathcal I}

\title{When One Good Is Not Enough: EF1 and Pareto Optimality Are Not Compatible for Submodular Valuations}

\author{%
  Simon Mackenzie%
  \thanks{Email: \href{mailto:simon.william.mackenzie@gmail.com}
{\nolinkurl{simon.william.mackenzie@gmail.com}}}
\and 
  Mashbat Suzuki%
  \thanks{Email: \href{mailto:mashbat.suzuki@unsw.edu.au}
  {\nolinkurl{mashbat.suzuki@unsw.edu.au}}}
}

\date{}

\begin{document}
\hypersetup{
  linktocpage,
  colorlinks=true,
  citecolor=cobalt,
  urlcolor=ptblue,
  linkcolor=Plum,
  pdftitle={When One Good Is Not Enough: EF1 and Pareto Optimality Are Not Compatible for Submodular Valuations},
  pdfauthor={Mashbat Suzuki and Simon Mackenzie},
  pdfcreator={},
  pdfproducer={}
}
\begin{titlepage}
\maketitle
\thispagestyle{empty}

\begin{abstract}

One of the central questions in discrete fair division is whether fairness and efficiency can be achieved simultaneously. For indivisible goods, a canonical relaxation of envy-freeness is envy-freeness up to one good (\EFone), while the standard efficiency benchmark is Pareto optimality (\PO). In their seminal work, \citet{CKM16} showed that, for additive valuations, \EFone and \PO are always compatible, and asked whether this compatibility extends to submodular valuations. This question has since become an important open problem in the study of fair division \citep{caragiannis2019unreasonable,benabbou2021finding,BS2026,KKM23}.

In this paper, we settle the question in the negative. We construct an instance with two agents and eight goods, where both agents have monotone submodular valuations, such that no \EFone allocation is even weakly Pareto optimal. Thus, the celebrated compatibility between \EFone and \PO for additive valuations breaks down already for two agents under submodular valuations. 

We then map the boundary of this impossibility. On the negative side, we show that even for weighted matroid rank valuations, a highly structured subclass of submodular valuations, \EFone and fractional Pareto optimality (\fPO) are incompatible. This rules out, in general, broad classes of weighted-welfare and Fisher-market-based approaches. On the positive side, we identify a common-envelope condition that restores compatibility. Under this condition, \EFonePO{} allocations exist for any number of agents. This yields new positive results showing that common-weight matroid-rank valuations always admit \EFonePO{} allocations.

Finally, we quantify the efficiency loss that is unavoidable when insisting on \EFone. Our submodular counterexample implies that there is a constant $\alpha<1$ such that no \EFone allocation is $\alpha$-\PO. For the broader class of subadditive valuations, we prove a tight two-agent bound, for any $\varepsilon>0$, there exists an instance in which no \EFone allocation is $\left(\frac{1}{\sqrt{2}}+\varepsilon\right)$-\PO.

\end{abstract}

\vspace{0.75em}

\noindent\textbf{AI disclosure.}
The authors used large language models as research assistants primarily to accelerate computational exploration, including generating and modifying search scripts and iterating quickly over experimental directions. All reported instances, computational outputs, and final wording were checked by the authors and are the authors' responsibility.

\end{titlepage}
\hypersetup{pageanchor=true}
\setcounter{page}{1}

\section{Introduction}

Fair division of resources is governed by a basic tension. On the one hand, a
fair allocation should not leave any agent feeling that another agent has
received a substantially better bundle. On the other hand, an allocation should
not waste resources: if every agent can be made weakly better off and at least one agent strictly better off, then the original allocation should not be regarded as satisfactory. Much of the modern theory of
fair division asks whether these two desiderata---fairness and efficiency---can
be reconciled.

For divisible goods, this tension is often resolved through concavity and market
equilibrium arguments \citep{Var1974,Thoms2011}. For indivisible goods, however,
envy-free allocations may fail to exist even before efficiency is imposed, for
the simple reason that a single valuable good cannot be split. Thus, one must
relax envy-freeness in a way that accounts for indivisibilities. The most
prominent such relaxation is envy-freeness up to one good (\EFone), which states that whenever
agent \(i\) envies agent \(j\), the envy can be eliminated by removing a single
good from \(j\)'s bundle. \EFone{} has become a central fairness benchmark because it captures the idea that any remaining envy can be attributed to the indivisibility of a single item, while still being guaranteed to exist for arbitrary monotone valuations \citep{LMM04}.

On the efficiency side, the canonical requirement is Pareto optimality (\PO).
An allocation is Pareto optimal if no other allocation makes every agent weakly
better off and at least one agent strictly better off. Pareto optimality is a
minimal efficiency condition, it does not impose a social objective, cardinal
comparisons across agents, or any external notion of welfare. Thus, the
compatibility of \EFone{} and \PO{} asks for a very basic guarantee: can we
always allocate indivisible goods so that the outcome is fair up to one item and
contains no obvious waste?

For additive valuations, the answer is yes, \citet{CKM16} showed that
\EFone{} and \PO{} allocations always exist. They also showed that this
compatibility breaks down for subadditive valuations, but left open the
intermediate case of submodular valuations. This case is especially natural as
submodularity captures diminishing marginal returns and is therefore a discrete
analogue of concavity, the condition that underlies many positive results for
divisible goods. The existence of \EFone{} and \PO{} allocations under
submodular valuations has consequently remained an important open problem and
has been repeatedly highlighted in the literature
\citep{caragiannis2019unreasonable,benabbou2021finding,BS2026,KKM23}.

\medskip

We answer this question in the negative. 

\begin{theorem}
There exists a two agent and eight good instance with monotone submodular valuations such that no allocation is both \EFone and \PO. 
\end{theorem}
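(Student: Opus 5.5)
The statement is an existence claim for a counterexample, so the plan is to exhibit an explicit instance and verify it. We take two agents and goods $M=\{g_1,\dots,g_8\}$. Each valuation $v_i$ is built from structured pieces so that monotonicity and submodularity can be checked by hand. Natural pieces are weighted coverage functions, or budget-additive functions $\min\{B,\sum_{g\in S} w_g\}$, or sums of such functions. All three are monotone submodular, and the class is closed under nonnegative sums.

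The design principle is to create a conflict between the two agents. Agent~1 should have strong complementarity-free but \emph{saturating} value on a few goods, so that beyond some point extra goods are nearly worthless to her. Any efficient allocation then wants to push those surplus goods to agent~2. Agent~2's valuation should in turn be arranged so that every bundle giving agent~1 enough to meet \EFone{} forces a wasteful split: some good in agent~1's bundle has zero marginal value to her but positive marginal value to agent~2. Conversely, every allocation that avoids such waste leaves one agent envious by more than any single good. Concretely, we search (e.g.\ by computer over small integer weights) for valuations for which the following conditions hold simultaneously over all $2^8$ allocations $(A,M\setminus A)$:
\begin{itemize}
\item[(i)] for every allocation that is \EFone, there exists another allocation weakly improving both agents and strictly improving at least one. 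The abstract suggests the stronger statement that both agents strictly improve, giving a failure of weak PO.
\item[(ii)] \EFone{} allocations exist, which holds automatically by \citet{LMM04}.
\end{itemize}

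The verification proceeds in three steps. First, we prove submodularity and monotonicity of each $v_i$ structurally, via the coverage or budget-additive decomposition, rather than by enumerating all $2^8$ sets. Second, we reduce the allocation space using symmetry among goods, for instance goods of identical roles. This lets us state a short table of candidate allocations. Third, we split into cases by the value agent~1 receives. In the low-value case, agent~1 envies agent~2 even after removing any single good; this is checked by bounding $v_1(M\setminus A\setminus\{g\})$ from below. In the high-value case, agent~1's bundle contains saturated goods, and moving a specified set to agent~2, possibly in exchange for another good, gives an explicit Pareto improvement. Each \EFone{} allocation is assigned an explicit dominating allocation. The argument is symmetric for agent~2's envy.

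The main obstacle is the construction itself, namely finding valuations in which the \EFone{} region and the PO region are disjoint. For additive valuations this is impossible by \citet{CKM16}, so the instance must exploit nonlinearity in an essential way. Our first attempt would be a computational search over budget-additive and coverage valuations with small integer parameters, since eight goods and two agents make exhaustive checking of all allocations and all Pareto comparisons cheap. Once an instance is found, the remaining work is presenting the case analysis compactly enough to be checkable by a reader.
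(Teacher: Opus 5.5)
Your proposal has a genuine gap: it never exhibits an instance. The theorem is purely existential, so its content is (a) an explicit pair of valuations on eight goods, (b) a check that both are monotone submodular, and (c) a check that every \EFone{} allocation is Pareto dominated. You give a search strategy and a verification template: reduce by symmetry, split into ``low-value'' and ``high-value'' cases, and name a dominator for each \EFone{} allocation. None of these steps can be carried out or checked without concrete valuations. Condition (i) in your list is just the desired conclusion restated, and you yourself call the construction the main obstacle.

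There is also no guarantee that your restricted search space contains a counterexample. The paper's instance is not a coverage function. Along the row \(x=0\), agent~1's values \(1,2,3,3+3\varepsilon\) at \(y=1,\dots,4\) have third difference \(-1+3\varepsilon<0\), and a coverage function always has nonnegative third-order derivatives. Your design heuristic, a good with zero marginal value to its holder but positive value to the other agent, also does not capture the mechanism that works. At three of the paper's four \EFone{} splits, no single-good transfer is a Pareto improvement. There the dominating allocation is a multi-good exchange, for example one \(A\)-good for two \(B\)-goods.

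The missing idea is the paper's explicit construction. It uses three goods of type \(A\) and five of type \(B\), with count-based valuations \(v_i(S)=g_i(|S\cap A|,|S\cap B|)\) given by \(4\times 6\) tables. Submodularity then reduces to the \(A\)- and \(B\)-marginal grids being nonnegative and nonincreasing along every row and column. The tables rise steeply, so that any four goods are worth at least \(3\), and then flatten into the band \([3,3+6\varepsilon]\). This forces \EFone{} to hold exactly at the four balanced splits \((0,4),(1,3),(2,2),(3,1)\). The \(\varepsilon\)-terms are arranged so that each balanced split is strictly dominated for both agents by the split \((1,2)\) or \((0,5)\). ``Fairness forces balance, and the \(\varepsilon\)-terms punish balance'' is exactly the content your proposal leaves unspecified.
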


In fact, we prove a stronger form of incompatibility. Our instance shows that,
for submodular valuations, there is a constant \(\alpha<1\) such that no
allocation is both \EFone{} and \(\alpha\)-\PO{}.\footnote{For a parameter
\(\alpha \in [0,1]\), an allocation \(A\) is said to be \(\alpha\)-\PO{} if
there is no allocation \(B\) such that, for every agent \(i\), \(\alpha\) times
agent \(i\)'s value for \(B_i\) is strictly larger than agent \(i\)'s value for
\(A_i\). When \(\alpha=1\), this coincides with weak Pareto optimality.}
Thus, the incompatibility persists even for an approximate efficiency notion,
every \EFone{} allocation can be simultaneously improved for all agents by a
multiplicative factor bounded away from one.

Motivated by this negative result, we next seek to map the boundary of the
incompatibility. A natural direction is to ask whether the obstruction persists
for highly structured subclasses of submodular valuations. One such class is common-weight matroid-rank valuations. In this class, each agent $i$ may have a different matroid \(\M_i=(M,\I_i)\) over the ground set of goods \(M\), but they agree on common weight function $w:M \to \RR_{\ge 0}$; the value of a set
is for agent $i$ given by its weighted matroid rank $v_i(S):=\max\left\{w(I): I\subseteq S,\ I\in\I_i\right\}$. Since these valuations are far more structured than general
submodular valuations, one might expect them to be amenable to the standard
techniques that succeed in the additive setting.

We show, however, that these standard techniques cannot be extended in a
black-box manner. Fisher-market methods \citep{barman2018finding} and
weighted-welfare-based methods \citep{mahara2025chores,BarmanIntros} run into a
fundamental obstruction, as we show that \EFone and \fPO are not compatible.
Specifically, we construct an instance with common-weight matroid-rank
valuations such that, for every choice of strictly positive welfare weights,
every allocation that maximizes the corresponding weighted welfare fails to
satisfy \EFone{}. Thus, the obstruction cannot be avoided by choosing different
welfare weights or by breaking ties carefully. Moreover, this failure is robust,
it persists under arbitrarily small perturbations of the valuations.

This obstruction is methodological rather than existential. Although
weighted-welfare maximization does not suffice by itself, we develop a different
approach based on a structural property that we call the \emph{common-envelope
condition}. This condition restores the compatibility of fairness and
efficiency for common-weight matroid-rank valuations, yielding the following
positive result.

\begin{theorem}
For any number of agents with common-weight matroid-rank valuations, there
exists an allocation that is both \EFone{} and \PO{}.
\end{theorem}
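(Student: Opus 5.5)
The plan is to isolate a structural property of common-weight matroid-rank valuations, the common-envelope condition, and to prove the positive result for every valuation profile satisfying it. The envelope here is the additive function $w$. Every agent satisfies $v_i(S)\le w(S)$ for all $S\subseteq M$, with equality exactly when $S$ is independent in $\I_i$, so that $v_i(S)=w(S)$ for every $S\in\I_i$. The first step is a cleaning reduction. Given any allocation $A$, replace each $A_i$ by a maximum-weight independent subset $I_i\subseteq A_i$ in $\M_i$, and hand the leftover goods to agents who value them or discard them into some bundle harmlessly. Under this reduction, Pareto dominance is preserved and values do not change. It therefore suffices to work with \emph{clean} allocations, in which $v_i(A_i)=w(A_i)$ for every $i$. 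The key consequence of the envelope, which I will use repeatedly, is
\[
v_i(A_j)\;\le\; w(A_j)\;=\;v_j(A_j)\qquad\text{for clean }A.
\]
That is, no agent values another agent's clean bundle more than its owner does. This is the feature that fails for the eight-good submodular counterexample.

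Second, I will select the allocation by maximizing Nash social welfare over clean allocations. Following the usual convention, I first maximize the number of agents with positive value and then maximize $\prod_{i:v_i(A_i)>0} v_i(A_i)$. Pareto optimality is the easy part. If $B$ Pareto dominates $A$, then cleaning $B$ yields a clean allocation in which either one more agent is positive or the product strictly increases, which contradicts maximality. The only point needing care is the zero-value agents, and these are handled by the lexicographic first criterion exactly as in \citet{CKM16}.

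Third, for EF1, I argue by contradiction. Suppose $v_i(A_i)<v_i(A_j\setminus\{g\})$ for every $g\in A_j$. Let $J\subseteq A_j$ be a maximum-weight independent set of $\M_i$ inside $A_j$. By the weighted augmentation/exchange property of matroids applied to $A_i$ and $J$ in $\M_i$, I want to find a good $g\in J\setminus A_i$ such that $A_i+g$, or $A_i+g-h$ for some cheaper $h\in A_i$, remains independent. Agent $i$'s marginal gain is then $\delta\ge w(g)-w(h)$. Because $A$ is clean, agent $j$ loses exactly $w(g)$ after the transfer and stays clean. The aim is to show that the transfer strictly increases the Nash product, i.e.
\[
\bigl(v_i(A_i)+\delta\bigr)\bigl(v_j(A_j)-w(g)\bigr)\;>\;v_i(A_i)\,v_j(A_j).
\]
To do this I will combine the EF1 violation, which gives $v_i(A_i)<v_i(A_j)-w(g)\le v_j(A_j)-w(g)$ by the envelope inequality, with the bound on $\delta$.

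The main obstacle is this exchange step. In the weighted setting, the good that maximizes $i$'s marginal gain need not be the one whose removal hurts $j$ least, and an exchange with $h$ reduces $\delta$ below $w(g)$. My first attempt will be to choose $g$ greedily. I will run the greedy algorithm for $\M_i$ on $A_i\cup J$ in decreasing weight order and take $g$ to be the heaviest element of $J$ that greedy selects in place of an element of $A_i$. The goal is to show that $\delta=w(g)$ can be arranged, i.e. that there is a pure addition with no swap, using the strict inequality $w(A_i)<w(J)-w(g)$ for every $g$. If this works, the Nash inequality reduces to $v_j(A_j)-w(g)>v_i(A_i)$, which is exactly the inequality already derived above.

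If a pure addition cannot always be found, I will instead replace Nash welfare by a leximin-type potential over clean allocations, refined by maximum total $w$. For that potential, single transfers along an exchange path in $\M_i$ suffice to derive the contradiction, and the common weights guarantee that the potential moves monotonically along such a path.
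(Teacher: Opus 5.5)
\paragraph{Gap 1: the Nash-welfare route fails.} Maximizing Nash welfare over clean allocations cannot be completed in this class. The exchange step you flag is not merely hard; the \EFone{} conclusion it is meant to deliver is false. Take goods $h,g_1,g_2$ with $w(h)=2$ and $w(g_1)=w(g_2)=3$. Agent~1 has the uniform matroid of rank one on all three goods. Agent~2 has the matroid in which $g_1,g_2$ are free and $h$ is a loop.

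Every allocation in which both agents have positive value gives utilities $(3,3)$, except $\{h\}\mid\{g_1,g_2\}$, which gives $(2,6)$. The latter is therefore the unique Nash optimum, both over clean allocations and over all allocations. Yet agent~1 values $\{g_1,g_2\}$ minus either good at $3>2$, so it is not \EFone{}.

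Here $\{h\}$ is a basis of $\M_1$, so no pure addition exists. The only transfer is a swap with $\delta=1$, and $(2+1)(6-3)<2\cdot 6$. Your claimed inequality $w(A_i)<w(J)-w(g)$ also fails here: $J=\{g_1\}$ gives $0<2$ false. Separately, an \EFone{} violation does not give $v_i(A_i)<v_i(A_j)-w(g)$, because $v_i(A_j\setminus\{g\})\ge v_i(A_j)-w(g)$. The valid chain is $v_i(A_i)<v_i(A_j\setminus\{g\})\le w(A_j)-w(g)$.

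\paragraph{Gap 2: the leximin fallback is unproved and the completion step is missing.} Your fallback, a leximin potential, is the right selection rule and is the one the paper uses. But you give no argument, and the reason it works is simpler than exchange paths. Let $P$ be leximin-optimal among tuples of disjoint sets with $P_k\in\I_k$, and suppose $i$ envies $j$ beyond one good.
\begin{itemize}
\item Submodularity yields some $e\in P_j$ with positive marginal on $P_i$.
\item Let $i$ take a maximum-weight independent subset of $P_i\cup\{e\}$; its value strictly exceeds $v_i(P_i)$.
\item Let $j$ keep $P_j\setminus\{e\}\in\I_j$, whose value is $w(P_j\setminus\{e\})\ge v_i(P_j\setminus\{e\})>v_i(P_i)$.
\end{itemize}
Both changed coordinates now exceed the old $v_i(P_i)$, which was the smaller of the two old ones, so the sorted vector strictly improves. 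The size of $i$'s gain is irrelevant.

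The completion step is missing entirely. Clean allocations must be partial. In the example above, the only complete allocation with every bundle independent is the non-\EFone{} one, and in general none need exist. So \emph{discarding leftovers into some bundle harmlessly} is exactly what needs proof. Once a bundle stops being independent, your inequality $v_i(A_j)\le v_j(A_j)$ no longer holds and new envy can appear.

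You need two further consequences of the optimality of $P$:
\begin{itemize}
\item Residual goods have zero marginal on top of each agent's own $P_k$.
\item The envy graph on bundles $P_k\cup X_k$ is acyclic. Otherwise, give each agent on a cycle a maximum-weight independent subset, in her own matroid, of the bundle she envies; this contradicts optimality of $P$.
\end{itemize}
Then hand the residual goods one at a time to a source of the envy graph. This keeps all values unchanged and preserves \EFone{}.

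Finally, \PO{} follows as well. The maximum-weight independent subsets of the bundles of any Pareto-dominating allocation would form a disjoint tuple that leximin-dominates $P$, a contradiction.
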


As noted earlier, our submodular counterexample proves more than the
incompatibility of \EFone{} and \PO{}. It shows that, for some constant
\(\alpha<1\), no allocation is both \EFone{} and \(\alpha\)-\PO{}. Since every
submodular valuation is subadditive, this immediately yields the same
incompatibility for subadditive valuations.

This strengthens the subadditive impossibility of \citet{CKM16}. They showed
that \EFone{} and \PO{} need not be compatible for subadditive valuations, but
their instance still admits an allocation that is both \EFone{} and weak-\PO. In contrast, our instance is the first known subadditive example
in which \EFone{} is incompatible with \(\alpha\)-\PO{} for some
\(\alpha<1\).

We then ask how far this incompatibility can be pushed. By the result of
\citet{BS2026}, every subadditive instance admits an allocation that is both
\EFone{} and \(\frac{1}{2}\)-\PO{}. Thus, the best possible approximation
threshold lies somewhere between \(\frac{1}{2}\) and \(1\). We give the first
nontrivial upper bound on this threshold.

\begin{theorem}
For every \(n\geq 2\) and every \(\varepsilon>0\), there exists an \(n\)-agent
instance with monotone subadditive valuations such that no allocation is both
\EFone{} and \(\left(\frac{1}{\sqrt{2}}+\varepsilon\right)\)-\PO{}.
\end{theorem}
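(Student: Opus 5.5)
The plan is to build a two-agent instance by hand, with a small number of valuation levels tuned by a parameter $t$, and then lift it to $n$ agents. Throughout, for every \EFone{} allocation $A$ I will exhibit a witness allocation $B$ with $\alpha\, v_i(B_i) > v_i(A_i)$ for both agents, where $\alpha = \frac{1}{\sqrt{2}}+\varepsilon$.

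\textbf{Step 1: the two-agent gadget.} Take two agents and a ground set made of two blocks $X$ and $Y$, each of $k$ goods, with $k$ large. Agent $1$'s valuation will depend only on the pair $(|S\cap X|,|S\cap Y|)$, and agent $2$'s valuation will be its mirror image (swapping $X$ and $Y$). Each valuation will be a maximum of a few "demand patterns," for example
\[
v_1(S)=\max\bigl\{\,\mathbb 1[S\supseteq X],\; t\cdot\mathbb 1[|S|\ge k],\; t^2\cdot\mathbb 1[S\neq\emptyset]\,\bigr\},
\]
with $t$ to be tuned. A maximum of monotone subadditive set functions is again monotone and subadditive. I will use this to certify subadditivity after possibly rounding the levels. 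One has to check that each threshold pattern is subadditive; $\mathbb 1[|S|\ge k]$ is not, so I would replace it by something like $\lceil |S|/k\rceil$ capped at $1$, which is. The intended efficient allocation gives agent $1$ all of $X$ and agent $2$ all of $Y$, so each agent has value $1$. However, that allocation fails \EFone{}, because each agent's top-level pattern is triggered by the other agent's bundle as well.

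\textbf{Step 2: case analysis over \EFone{} allocations.} By symmetry and monotonicity, I will classify an \EFone{} allocation $A$ by how many goods of $X$ and of $Y$ each agent receives. The aim is to prove two things.
\begin{itemize}
\item Either some agent is stuck at level at most $t^2$ while the other agent is below $1$, and then the allocation $(X,Y)$ improves both agents by a factor of at least $1/t$.
\item Or both agents are at level $t$, and then a different allocation lifts both agents by a factor of at least $1/(2t)$ relative to their current value. Here the \EFone{} constraint forces the split to be balanced, and subadditivity caps the value of a balanced split at half the total.
\end{itemize}
Taking the worse of the two cases gives a bound of $\max(t, 2t^2)$-type, so the two ratios are equalised at $t=1/\sqrt{2}$. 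The slack $\varepsilon$ absorbs both the $1/k$ boundary effects from the single removed good and the requirement that the improvement be strict.

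\textbf{Step 3: extension to $n$ agents.} For $n$ agents, I will replicate the gadget on disjoint goods for $\lfloor n/2\rfloor$ pairs of agents, with each agent valuing only the goods of its own gadget. Cross-gadget envy is then zero, and a witness $B$ can be assembled gadget by gadget. When $n$ is odd, the extra agent has to be absorbed somewhere. A dummy agent holding a fixed good cannot be strictly improved, so that does not work. Instead, I would add a third agent to one gadget whose valuation is identical to agent $1$'s, and rerun the case analysis with three agents.

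\textbf{Main obstacle.} The delicate step is Step 2, and specifically designing the valuation levels so that \emph{every} \EFone{} split falls into one of the two cases with the stated ratio while the valuations remain subadditive. The odd-$n$ extension is the second risk. My first attempt at Step 2 would be to compute, by a small search over $(|S\cap X|,|S\cap Y|)$ patterns, the minimum over \EFone{} allocations of the best improvement factor. I would then adjust the levels until this minimum equals $\sqrt{2}-O(\varepsilon)$.
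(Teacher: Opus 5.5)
\textbf{There is a genuine gap: your Step~1 gadget is not a counterexample.} Step~2, which you defer to a search, is where the whole theorem lives. With $v_1$ as written and $v_2$ its mirror, the allocation $(X,Y)$ gives $v_1(X)=v_2(Y)=1$, the largest value either agent can ever get, while $v_1(Y)=v_2(X)=t<1$. So $(X,Y)$ is envy-free and Pareto optimal, and the instance admits \EFone{} and \PO{}. Your claim that each agent's top pattern is triggered by the other's bundle is false, because agent~1's top pattern needs all of $X$. Replacing the threshold by $\min\{1,\lceil |S|/k\rceil\}$ does not help, since that is just $\mathbf 1[S\neq\emptyset]$.

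The Step~2 heuristics are also unsound:
\begin{itemize}
\item Subadditivity gives $v(S)+v(M\setminus S)\ge v(M)$, which is a lower bound, so it cannot ``cap a balanced split at half the total''.
\item A factor $1/(2t)$ at $t=1/\sqrt2$ equals $1/\sqrt2<1$, which is no improvement at all.
\item Solving $t=2t^2$ gives $t=\tfrac12$, not $1/\sqrt2$.
\end{itemize}
More importantly, the case ``both agents at level $t$'' cannot be handled; it must be excluded by design. Values are capped at $1=\sqrt2\,t$, so beating that case by a factor near $\sqrt2$ needs an allocation giving both agents essentially $1$.

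Your subadditivity certificate is also wrong as stated: $\mathbf 1[S\supseteq X]$ is not subadditive. What actually makes your functions subadditive is that every nonempty set is worth at least $t^2$ and no set more than $1$. This works exactly when $2t^2\ge 1$. That constraint, together with the geometric ladder $t^2,t,1$, is the real reason $\sqrt2$ appears.

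\textbf{What the paper does.} It uses exactly your palette $\tfrac12,r,1$ with $r=1/\sqrt2$, but arranges it so that each agent climbs exactly one rung. It uses three $A$-goods, five $B$-goods, and explicit count tables with $\varepsilon$ tie-breakers.
\begin{itemize}
\item \EFone{} holds only at the four balanced splits, whose limiting utilities are $(\tfrac12,r)$ or $(r,\tfrac12)$.
\item The splits $(3,0)$ and $(2,3)$ give $(r,1)$ and $(1,r)$. They fail \EFone{} because the agent at $r$ values the other bundle minus any good at $r+\varepsilon$ or more.
\end{itemize}
So every \EFone{} allocation is improved by a factor tending to $\min\{r/\tfrac12,\,1/r\}=\sqrt2$.

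For $n\ge 3$, odd or even, the paper does not replicate pairs. It adds $n-2$ auxiliary agents and $(n-2)T$ auxiliary goods.
\begin{itemize}
\item Auxiliary agents value these goods via $q_T$: $\tfrac12$ for up to $T-2$ goods, $r$ at $T-1$, and $1$ from $T$ on.
\item Agents $1$ and $2$ value them via $h_T\le t$, where $\tfrac12+\varepsilon<t<r$.
\end{itemize}
Suppose an auxiliary agent held $T$ such goods. The other auxiliary agents would then also have to hold $T-1$ each, and agents $1$ and $2$ would both need seed value above $t$. That forces a seed split which already violates \EFone{}. Hence every auxiliary agent sits at value at most $r$, and the witness allocation lifts each of them to $1$.

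Your replication plan would also have to handle outsiders holding a gadget's goods. In that case cross-gadget envy is not zero, because every nonempty set is worth at least $t^2$. Finally, your odd-$n$ fix is left unspecified.
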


There is a general route from Nash social welfare to approximate Pareto
optimality. If an allocation achieves a \(\beta\)-approximation to the optimal
\NSW{}, then it is \(\beta\)-\PO{}. Indeed, if another allocation improved every
agent by a multiplicative factor larger than \(1/\beta\), then it would also
improve the Nash social welfare by a factor larger than \(1/\beta\),
contradicting \(\beta\)-optimality. Thus, any lower bound on the best
\NSW{} approximation achievable by an \EFone{} allocation immediately yields a
lower bound on the best approximate Pareto optimality achievable together with
\EFone{}.

For two agents, this connection exactly matches our upper bound.
\citet[Theorem A.2]{BS2026} show that every two-agent subadditive instance
admits an \EFone{} allocation whose \NSW{} is at least \(\frac{1}{\sqrt{2}}\) times
optimal. By the observation above, such an allocation is
\(\frac{1}{\sqrt{2}}\)-\PO{}. Our impossibility result shows that this factor is
best possible. We believe this is not a coincidence, but rather evidence of a
deeper connection between approximate efficiency and Nash social welfare in the
fair allocation of indivisible goods with subadditive valuations.

\medskip 

Due to space constraints, some proofs are deferred to the appendix.

\medskip 

\noindent\textbf{Organization.} \Cref{sec:prelim} gives definitions.  Theorem~1.1 is
restated as \Cref{thm:submodular-ce} and proved in
\Cref{sec:core-submodular}, which also contains the even-agent weak-\PO{}
extension and the relabeling examples.  \Cref{sec:subclasses} proves the
\fPO{} barriers and the common-envelope theorem; Theorem~1.2 follows from
\Cref{cor:common-weight-matroid}.  \Cref{sec:approx} proves Theorem~1.3,
restated as \Cref{thm:subadditive-alpha,thm:subadditive-alpha-alln}.

\medskip 

\noindent\textbf{Related Work.}
For divisible goods, equilibrium and Nash-welfare arguments give envy-free and
Pareto-optimal outcomes \citep{Var1974,Thoms2011}.  For indivisible goods,
\EFone{} is the standard relaxation that is studied widely in the fair division literature \citep{LMM04,Bud11}.  For additive valuations,  maximum Nash welfare solution always satisfies \EFone{} and \PO{} \citep{CKM16,caragiannis2019unreasonable}.  However, Nash welfare maximization is not computationally tractable. To remedy this, \citet{barman2018finding} showed that there exists a pseudo-polynomial time algorithm to achieve \EFone and \PO using a market-based approach. Further market-based
approaches has been subsequently explored \citep{murhekar2021fair,garg2023fewvalues, DBLP:conf/wine/Mahara25}.

Beyond additivity, \citet{caragiannis2019unreasonable} showed subadditive
incompatibility, proved that for submodular valuations Nash welfare optimal allocations satisfy  marginal-\EFone{}.  Approximate
fairness-and-Nash-welfare tradeoffs were developed for additive valuations with
charity by \citet{caragiannis2019efxcharity}, and for subadditive valuations by
\citet{chaudhury2021subadditive,BS2026}.

Gross substitutes \citep{kelso1982job,gul1999walrasian,paesleme2017gross} are
a central structured subclass of submodular valuations.  They include OXS
valuations \citep{lehmann2006combinatorial} and weighted matroid-rank
valuations \citep{paesleme2017gross,schrijver2003combinatorial}.  Fair and
efficient allocation is much better understood for unweighted matroid-rank
valuations: prior work gives near-fairness in matroidal settings, \EFone{}+\PO{}
and Lorenz-dominating allocations, and efficient Yankee-swap-type algorithms
\citep{gourves2014near,benabbou2021finding,babaioff2021dichotomous,viswanathan2023yankee,viswanathan2023general}.
Related guarantees are known for assignment valuations
\citep{KKM23}.  At the same time, gross substitutes form only one structured
part of the broader submodular landscape \citep{dobzinski2021coverage}.  We show
that arbitrary submodular valuations are too broad for \EFone{}+\PO{}, but
recover compatibility for common-weight matroid rank via the common-envelope
theorem.

Our approximate-efficiency results are related to, but distinct from, the
price-of-fairness literature, which measures loss in aggregate welfare under
fairness constraints \citep{bei2021price,barman2020price}.  We instead ask for
coordinate-wise approximate Pareto optimality.  \citet{BS2026} prove
\EFone{} and \(\frac12\)-\PO{} allocation always exists for subadditive valuations and, for two agents, an
\EFone{} allocation with \(\frac{1}{\sqrt2}\)-approximate Nash welfare exists, implying
\(\frac{1}{\sqrt2}\)-\PO{}. Our construction for two agents gives the matching upper bound.
Welfare-approximation work for subadditive valuations provides a parallel
benchmark for this Nash-welfare route \citep{barman2020pmean}.  Finally,
market, spending-restricted, and positive weighted-welfare methods often target
\fPO{}: they yield \EFone{}+\fPO{} for additive goods and have analogues for
chores and mixed manna
\citep{barman2018finding,garg2022bivaluedchores,ebadian2022chores,mahara2025chores,aziz2018goodschores,bhaskar2021approximate,liu2023mixedSurvey,garg2024mixedmanna,BarmanIntros}.
Our \fPO{} examples show that this target can be too strong even when ordinary
\EFone{} and \PO{} allocations exist.

\section{Preliminaries}\label{sec:prelim}

We study fair division of a finite set \(M\) of indivisible goods among a set
of agents \(N=\{1,\ldots,n\}\).  The preferences of agent \(i\in N\) are
represented by a valuation, or set function, \(v_i:2^M\to\RR_{\geq 0}\).  We always
normalize valuations so that \(v_i(\emptyset)=0\).  Unless stated otherwise, we
work with goods instances, where valuations are nonnegative and monotone:
\(v_i(S)\ge0\) for all \(S\subseteq M\), and \(v_i(S)\le v_i(T)\) whenever
\(S\subseteq T\subseteq M\).  We write \(v_i(g)\) as shorthand for
\(v_i(\{g\})\).  The mixed-manna example in \Cref{thm:mixed-fpo-ce} allows
signed additive valuations.  Allocation and efficiency notions are unchanged
there; the mixed-manna convention for \EFone{} is stated below.

An allocation \(X=(X_i)_{i\in N}\) is a partition of \(M\), where \(X_i\) is the
bundle assigned to agent \(i\).  We denote by \(\mathcal A\) the set of all
allocations.  A partial allocation is a tuple of pairwise disjoint bundles whose
union may be a strict subset of \(M\).

The fairness and efficiency notions used in the paper are defined next.

\begin{definition}[\EFone]
An allocation \(X=(X_i)_{i\in N}\) is envy-free up to one good (\EFone) if, for
every ordered pair of agents \(i,j\in N\), either \(X_j=\emptyset\), or there
exists a good \(g\in X_j\) such that
\[
  v_i(X_i)\ge v_i(X_j\setminus\{g\}).
\]

\end{definition}

\begin{definition}[Pareto notions]
An allocation \(Y\) Pareto-dominates an allocation \(X\) if
\(v_i(Y_i)\ge v_i(X_i)\) for every agent \(i\in N\), and the inequality is
strict for at least one agent.  An allocation is Pareto optimal (\PO) if it is
not Pareto-dominated.  It is weakly Pareto optimal (\wPO) if there is no
allocation \(Y\) such that \(v_i(Y_i)>v_i(X_i)\) for every agent \(i\in N\).
\end{definition}

For an allocation \(X\), its utility vector is
\((v_i(X_i))_{i\in N}\).  An allocation is leximin-optimal if its utility
vector, sorted in nondecreasing order, is lexicographically maximum among all
allocations.

\begin{definition}[\(\alpha\)-Pareto optimality]
For \(0<\alpha\le1\), an allocation \(X\) is \(\alpha\)-Pareto optimal if there
is no allocation \(Y\) such that
\[
  v_i(Y_i)>\frac{1}{\alpha}\,v_i(X_i)
  \qquad\text{for every agent } i.
\]
For \(\alpha=1\), this is exactly weak Pareto optimality, it rules out an
allocation that strictly improves every agent.  When \(\alpha<1\), the condition
is weaker, because it only rules out simultaneous improvements by a factor
strictly larger than \(1/\alpha\).  Some works use a stronger coordinate-wise
version, where every agent must be weakly improved by the factor \(1/\alpha\)
and at least one agent strictly improved; any allocation satisfying that
stronger version also satisfies the convention used here.
\end{definition}

\begin{definition}[Fractional Pareto optimality]
A deterministic allocation \(X\) is fractionally Pareto optimal (\fPO) if it is
not Pareto-dominated by any lottery over deterministic allocations, where each
agent evaluates a lottery by expected utility.  Equivalently, there do not
exist allocations \(Y^1,\ldots,Y^k\) and coefficients
\(\lambda_1,\ldots,\lambda_k > 0\), \(\sum_\ell\lambda_\ell=1\), such that
\[
  \sum_{\ell=1}^k \lambda_\ell v_i(Y_i^\ell)\ge v_i(X_i)
  \quad\forall i,
  \qquad
  \sum_{\ell=1}^k \lambda_\ell v_j(Y_j^\ell)> v_j(X_j)
  \quad\text{for some }j.
\]
\end{definition}

We use the following standard characterization of \fPO, which follows from
Farkas' lemma.

\begin{proposition}[fPO and weighted welfare]\label{prop:fpo-welfare}
An allocation \(X\) is \fPO if and only if there are strictly positive weights
\(\lambda_1,\ldots,\lambda_n>0\) such that
\[
  X\in\argmax_{Y\in\mathcal A}\sum_{i\in N}\lambda_i v_i(Y_i).
\]
\end{proposition}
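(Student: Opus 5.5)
Both directions go through the finite set of achievable utility vectors and a separating-hyperplane (Farkas) argument. Let \(\mathcal A\) be finite, and for each allocation \(Y\) let \(u(Y)=(v_i(Y_i))_{i\in N}\in\RR^n\). Put \(U=\{u(Y):Y\in\mathcal A\}\) and let \(P=\operatorname{conv}(U)\), a polytope. Lotteries over deterministic allocations give exactly the expected utility vectors in \(P\). So, by the definition, \(X\) is \fPO{} if and only if \(u(X)\) is Pareto-undominated inside \(P\): there is no \(p\in P\) with \(p\ge u(X)\) coordinatewise and \(p\neq u(X)\).

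\emph{The ``if'' direction} is a direct computation. Suppose \(X\) maximizes \(\sum_i\lambda_i v_i(Y_i)\) with all \(\lambda_i>0\), and suppose some lottery \(\sum_\ell\mu_\ell u(Y^\ell)\) dominated \(u(X)\). Multiplying the coordinatewise inequalities by \(\lambda_i>0\) and summing gives
\[
\sum_\ell\mu_\ell\sum_i\lambda_i v_i(Y^\ell_i)>\sum_i\lambda_i v_i(X_i).
\]
The inequality is strict because some coordinate is strict and its weight is positive. Since the left side is a convex combination, some \(Y^\ell\) then has strictly larger weighted welfare than \(X\). This contradicts the choice of \(X\).

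\emph{The ``only if'' direction} is the main step, and I would set it up as a linear feasibility problem. Write \(x=u(X)\). Undominatedness says the system
\[
\sum_\ell\mu_\ell\bigl(u(Y^\ell)-x\bigr)\ge 0,\qquad \mathbf 1^\top\!\sum_\ell\mu_\ell\bigl(u(Y^\ell)-x\bigr)>0,\qquad \mu\ge0,
\]
has no solution. Normalization \(\sum_\ell\mu_\ell=1\) can be dropped by homogeneity, since any nonzero \(\mu\ge0\) can be rescaled, and \(\mu=0\) fails the strict inequality. Equivalently, the cone \(K=\operatorname{cone}\{u(Y)-x:Y\in\mathcal A\}\) meets \(\RR^n_{\ge0}\) only at the origin.

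I would invoke a strict-positivity version of Farkas' lemma, such as Stiemke's or Tucker's theorem of the alternative. Its conclusion is that there exists \(\lambda\in\RR^n\) with \(\lambda>0\) strictly in every coordinate and \(\lambda^\top k\le0\) for all \(k\in K\). Concretely, the two polyhedral cones \(K\) and \(\RR^n_{\ge0}\) intersect only at \(0\), so the polar cone \(K^\circ\) and the interior of \(\RR^n_{\ge 0}\) must intersect. The inequality \(\lambda^\top(u(Y)-x)\le0\) for every \(Y\) is exactly the statement that \(X\) maximizes \(\sum_i\lambda_i v_i(Y_i)\).

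The main obstacle is securing \emph{strict} positivity of every \(\lambda_i\). A plain hyperplane separation only yields \(\lambda\ge0\), \(\lambda\ne0\); that would characterize weak Pareto optimality, not \fPO{}. The polyhedrality of \(K\), which holds because \(\mathcal A\) is finite, is what makes Stiemke/Tucker apply. If I wanted to avoid citing that theorem, I would argue by LP duality instead. Consider the linear program \(\max \mathbf 1^\top s\) subject to \(\sum_\ell\mu_\ell(u(Y^\ell)-x)=s\), \(s\ge0\), \(\mu\ge0\), \(\sum_\ell\mu_\ell=1\). Its optimum is \(0\), since \(\mu\) concentrated on \(X\) gives \(s=0\) and undominatedness excludes any \(s\neq0\). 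The dual optimal solution then supplies weights \(\lambda\ge\mathbf 1\), hence strictly positive, with \(\lambda^\top(u(Y)-x)\le 0\) for all \(Y\).
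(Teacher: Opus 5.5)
Your proof is correct and follows the route the paper indicates. The paper gives no proof and only calls this a standard consequence of Farkas' lemma; your argument (the weighted-sum computation for one direction, Stiemke/Tucker or the LP-duality variant for the converse) is exactly such a Farkas-type derivation, and the dual constraint on the slack variables \(s\), which forces \(\lambda\ge\mathbf 1\), correctly delivers strictly positive weights rather than merely \(\lambda\ge 0\), \(\lambda\ne 0\).
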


\paragraph{Valuation classes.}
We use the following standard valuation classes.  A valuation \(v\) is additive
if \(v(S)=\sum_{g\in S}v(g)\) for every \(S\subseteq M\).  It is subadditive if
\(v(S\cup T)\le v(S)+v(T)\) for all \(S,T\subseteq M\).  It is submodular if it
has diminishing marginal returns:
\[
  v(S\cup\{g\})-v(S)\ge v(T\cup\{g\})-v(T)
  \quad\text{whenever } S\subseteq T,\ g\notin T.
\]

A valuation is gross substitutes (\GS) if the demand for goods whose prices do
not increase can be preserved when other prices rise. For any price vectors
\(p'\ge p\) and any demanded bundle \(S\in\argmax_T(v(T)-p(T))\), there is a
demanded bundle \(S'\in\argmax_T(v(T)-p'(T))\) such that
\(S\cap\{g:p'_g=p_g\}\subseteq S'\).

One important subclass of gross substitutes is weighted matroid rank (WMR).  A
WMR valuation is specified by a matroid \((M,\mathcal I)\) and nonnegative item
weights \(w\), with
\[
  v(S)=\max\{w(I): I\subseteq S,\ I\in\mathcal I\}.
\]
In a common-weight matroid-rank profile, the agents may have different matroids
but share the same item weights.
Weighted matroid-rank valuations are gross substitutes, and gross-substitutes
valuations are submodular
\citep{kelso1982job,gul1999walrasian}.

\[
\text{Additive} \subsetneq \text{Weighted Matroid Rank}
\subsetneq \text{GS} \subsetneq \text{Submodular}
\subsetneq \text{Subadditive} \, . 
\]

\section{The Core Submodular Obstruction}\label{sec:core-submodular}
We begin with the main counterexample.  The instance only has two agents and eight goods of two types $A$ and $B$. Agents are indifferent among goods of the same type, and hence the valuations can be represented by tables indexed by the numbers of $A$- and $B$-goods received.
 Fairness restricts the outcome to one of four balanced allocations, which we refer to as \emph{splits}, yet each such split is strictly dominated by an unbalanced one.

\medskip 

\noindent\textbf{Instance Description.} There are two agents and eight goods of
two types,
\[
  A=\{a_1,a_2,a_3\},\qquad
  B=\{b_1,b_2,b_3,b_4,b_5\}.
\]
Both agents are indifferent among goods of the same type, so a value of a  bundle \(S\) is only determined by the type counts
\[
  x(S)=|S\cap A|,\qquad y(S)=|S\cap B|.
\]
This allows us to specify each agents valuations via a $4\times 6$ grid.

A coarse grained description of the instance is that the valuations
rise quickly and then flatten out, no bundle is worth more than
\(3+6\varepsilon\) to either agent, and any four of the eight goods are
already worth at least \(3\). 
Once each agent holds four goods, the two
utilities are therefore confined to the narrow band
\([3,\,3+6\varepsilon]\), and everything is decided by the
\(\varepsilon\)-terms.  These are arranged to work against each other.
\EFone will hold only at four \emph{balanced} splits, in which each agent
receives exactly four goods, while each balanced split will be strictly
worse, for both agents at once, than some unbalanced one.  Fairness forces
balance; efficiency punishes it.

Fix \(0<\varepsilon\le 1/6\).  Agent 1's value table is
\[
\begin{array}{c|r@{}l r@{}l r@{}l r@{}l r@{}l r@{}l}
g_1(x,y)
  & \multicolumn{2}{c}{y=0}
  & \multicolumn{2}{c}{y=1}
  & \multicolumn{2}{c}{y=2}
  & \multicolumn{2}{c}{y=3}
  & \multicolumn{2}{c}{y=4}
  & \multicolumn{2}{c}{y=5} \\
\hline
x=0 & 0&& 1&& 2&& 3&& 3&+3\varepsilon& 3&+6\varepsilon \\
x=1 & 2&& 3&& 3&+4\varepsilon& 3&+5\varepsilon
  & 3&+6\varepsilon& 3&+6\varepsilon \\
x=2 & 2&+\varepsilon& 3&+\varepsilon& 3&+5\varepsilon
  & 3&+6\varepsilon& 3&+6\varepsilon& 3&+6\varepsilon \\
x=3 & 2&+\varepsilon& 3&+\varepsilon& 3&+5\varepsilon
  & 3&+6\varepsilon& 3&+6\varepsilon& 3&+6\varepsilon
\end{array}
\]
and agent 2's value table is
\[
\begin{array}{c|r@{}l r@{}l r@{}l r@{}l r@{}l r@{}l}
g_2(x,y)
  & \multicolumn{2}{c}{y=0}
  & \multicolumn{2}{c}{y=1}
  & \multicolumn{2}{c}{y=2}
  & \multicolumn{2}{c}{y=3}
  & \multicolumn{2}{c}{y=4}
  & \multicolumn{2}{c}{y=5} \\
\hline
x=0 & 0&& 1&& 2&& 3&& 3&+5\varepsilon& 3&+5\varepsilon \\
x=1 & 1&+2\varepsilon& 2&+\varepsilon& 3&& 3&+3\varepsilon
  & 3&+6\varepsilon& 3&+6\varepsilon \\
x=2 & 2&+3\varepsilon& 3&& 3&+3\varepsilon
  & 3&+6\varepsilon& 3&+6\varepsilon& 3&+6\varepsilon \\
x=3 & 3&+4\varepsilon& 3&+5\varepsilon& 3&+6\varepsilon
  & 3&+6\varepsilon& 3&+6\varepsilon& 3&+6\varepsilon
\end{array}
\]
Each agent $i\in \{1,2\}$ has a valuation function
\[
  v_i(S)=g_i(x(S),y(S)) \quad \text{ for all } S\subseteq A\cup B
\]
We now show that the instance described above does not admit any allocation that is both \EFone and \PO.

\begin{theorem}\label{thm:submodular-ce}
There exists an instance with two agents and eight goods, where both agents have
monotone submodular valuations, such that, for every
\(\alpha >  \frac{23}{24}\), no allocation is both \EFone{} and
\(\alpha\)-\PO{}. In particular, no allocation is both \EFone{} and \PO{}
\end{theorem}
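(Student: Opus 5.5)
The proof is a finite verification, organized in three steps. Throughout I fix \(\varepsilon=1/6\), since the constant \(\tfrac{23}{24}\) appears to come from the ratio \((3+5\varepsilon)/(3+6\varepsilon)=\tfrac{23/6}{4}\). Because each \(v_i\) depends only on the type counts, an allocation is determined up to relabeling by the pair \((x,y)\) with \(0\le x\le 3\), \(0\le y\le 5\) describing agent 1's bundle. Agent 2 then receives \((3-x,5-y)\). This leaves \(24\) cases, and \EFone{} and \(\alpha\)-\PO{} are both invariant under relabeling goods of the same type.

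\textbf{Step 1: monotonicity and submodularity.} Monotonicity means each table \(g_i\) is nondecreasing along rows and columns, which is read off directly. For a valuation of the form \(g(x(S),y(S))\), submodularity is equivalent to four conditions on the marginals
\[
\Delta_A g(x,y)=g(x+1,y)-g(x,y),\qquad \Delta_B g(x,y)=g(x,y+1)-g(x,y).
\]
Each marginal must be nonincreasing in \(x\) and in \(y\). Adding a good of the same type only raises that coordinate, and adding a good of the other type raises the other coordinate. Every \(S\subseteq T\) changes the counts monotonically, so these local conditions imply the global one by chaining. I would tabulate \(\Delta_A\) and \(\Delta_B\) for both agents and check the four monotonicity conditions entry by entry. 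Two examples: for agent 1, \(\Delta_B g_1(0,\cdot)=1,1,1,3\varepsilon,3\varepsilon\), and \(\Delta_A g_1(\cdot,0)=2,\varepsilon,0\). The hypothesis \(\varepsilon\le 1/6\) is used here to ensure inequalities such as \(3\varepsilon\le 1\) hold.

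\textbf{Step 2: characterize the \EFone{} allocations.} Since goods of a type are interchangeable, the \EFone{} condition for agent \(i\) is concrete: \(i\)'s own value must be at least \(i\)'s value for the other bundle with one good of the better removable type taken out. Removing an \(A\)-good or a \(B\)-good gives two candidates, and we take the minimum over them. I would run through the \(24\) pairs \((x,y)\) and show that \EFone{} holds only for the four balanced splits, where each agent gets four goods. These should be \((x,y)\in\{(0,4),(1,3),(2,2),(3,1)\}\). Any unbalanced split gives one agent at most three goods and the other at least five. The design of the tables (values at most \(3+6\varepsilon\), while any five goods minus one good is still worth at least \(3\)) makes the poorer agent's envy survive the removal of one good, so these cases should fail quickly.

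\textbf{Step 3: exhibit a dominating allocation for each balanced split.} For each balanced split \((x,y)\), I would find an unbalanced split \((x',y')\) giving both agents strictly larger values, by a factor at least \(\tfrac{24}{23}\) when \(\varepsilon=1/6\). For example, from \((1,3)\) agent 1 has \(3+5\varepsilon\) and agent 2 has \(g_2(2,2)=3+3\varepsilon\). Moving to \((1,4)\) gives agent 1 \(3+6\varepsilon\) and agent 2 \(g_2(2,1)=3\), which fails for agent 2. Moving to \((0,3)\) gives agent 1 \(3\), which also fails. So the right witness has to be searched across all \(24\) cells, looking for a cell where both agents' entries exceed their balanced-split values. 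Then for every \(\alpha>\tfrac{23}{24}\), checking the worst of the four ratios shows each \EFone{} allocation is not \(\alpha\)-\PO{}. Since \PO{} implies weak-\PO{}, which is \(1\)-\PO{}, the last claim of the theorem follows.

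\textbf{Main obstacle.} The obstacle is locating the four dominating splits in Step 3 and confirming that the minimal improvement ratio is exactly \(\tfrac{24}{23}\) (a jump from \(3+5\varepsilon\) to \(3+6\varepsilon\)). The remaining checks are routine table lookups.
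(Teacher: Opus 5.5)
Your three-step plan is the paper's proof: marginal grids for submodularity, \EFone{} forcing one of the balanced splits $(0,4),(1,3),(2,2),(3,1)$, and explicit dominating splits with worst ratio $(3+6\varepsilon)/(3+5\varepsilon)$. Your sample marginal rows agree with the paper's. The gap is Step 3, which carries the content of the theorem and which you never complete. You record two failed candidates and defer to a search, so as written no balanced split is shown to be dominated. The witnesses are the following.
\begin{itemize}
\item The split $(1,2)$ has utility pair $\bigl(g_1(1,2),g_2(2,3)\bigr)=(3+4\varepsilon,3+6\varepsilon)$. It strictly dominates $(0,4)$, with pair $(3+3\varepsilon,3+5\varepsilon)$, and $(3,1)$, with pair $(3+\varepsilon,3+5\varepsilon)$.
\item The split $(0,5)$ has pair $\bigl(g_1(0,5),g_2(3,0)\bigr)=(3+6\varepsilon,3+4\varepsilon)$. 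It strictly dominates both $(1,3)$ and $(2,2)$, each with pair $(3+5\varepsilon,3+3\varepsilon)$.
\end{itemize}
The per-agent improvement factors are $(3+4\varepsilon)/(3+3\varepsilon)$, $(3+4\varepsilon)/(3+\varepsilon)$ and $(3+6\varepsilon)/(3+5\varepsilon)$. The last is the smallest, since $(3+6\varepsilon)(3+3\varepsilon)<(3+4\varepsilon)(3+5\varepsilon)$. At $\varepsilon=1/6$ it equals $24/23$, which confirms your guess for the constant.

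In Step 2, the reason you give does not on its own make the poorer agent's envy survive. Every four-good bundle is worth at least $3+\varepsilon$ to both agents, so the coarse bound does settle every unbalanced split where the agent holding at most three goods has value at most $3$. It fails in three cases, where a three-good bundle is worth more than $3$:
\begin{itemize}
\item agent~$1$ at $(1,2)$, worth $3+4\varepsilon$;
\item agent~$1$ at $(2,1)$, worth $3+\varepsilon$;
\item agent~$2$ holding three $A$-goods (split $(0,5)$), worth $3+4\varepsilon$.
\end{itemize}
These are decided only by the $\varepsilon$-coefficients. The opposing bundles $(2,3)$, $(1,4)$ and $(0,5)$ have one-good deletions worth at least $3+5\varepsilon$, $3+3\varepsilon$ and $3+5\varepsilon$, respectively, to the envious agent. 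So the 24-cell enumeration you propose is genuinely needed, and with these checks it reproduces the paper's grid. Only the direction ``\EFone{} implies balanced'' is used for the impossibility, so you need not verify that the balanced splits are themselves \EFone.
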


\begin{proof}
We first verify that the valuations are monotone and submodular. Since each
valuation depends only on the counts \((x,y)\), monotonicity is equivalent to
all \(A\)- and \(B\)-marginals being nonnegative. Similarly, submodularity is
equivalent to the diminishing-returns condition that each marginal is
nonincreasing as either count increases. The marginal grids are displayed in
Appendix~\ref{app:marginals}; in each grid, all entries are nonnegative and
each row and column is nonincreasing. Hence both valuations are monotone
submodular.

We now show that no allocation can be both \EFone{} and \wPO{}. We first
identify all allocations satisfying \EFone{}. A split \((x,y)\) means that
agent~1 receives \(x\) goods of type \(A\) and \(y\) goods of type \(B\), while
agent~2 receives the remaining \((3-x,5-y)\) goods. For agent~1, deleting one
good from agent~2's bundle changes agent~2's counts to either
\((2-x,5-y)\), if \(x<3\), or \((3-x,4-y)\), if \(y<5\). Thus agent~1 satisfies
the \EFone{} condition exactly when \(g_1(x,y)\) is at least one of these two
values, whenever the corresponding deletion is possible. Symmetrically, agent~2
satisfies the \EFone{} condition exactly when \(g_2(3-x,5-y)\) is at least one
of \(g_2(x-1,y)\), if \(x>0\), and \(g_2(x,y-1)\), if \(y>0\). If the envied
bundle is empty, the condition is vacuous.

The following grid records the outcome of these tests. An entry \(E\) marks a
split satisfying \EFone{} for both agents; an entry \(1\) marks a split where
agent~1 still envies after the deletion of any single good from agent~2's
bundle; and an entry \(2\) marks the symmetric failure for agent~2.
\[
\begin{array}{c|cccccc}
 & y=0 & y=1 & y=2 & y=3 & y=4 & y=5 \\
\hline
x=0 & 1 & 1 & 1 & 1 & E & 2 \\
x=1 & 1 & 1 & 1 & E & 2 & 2 \\
x=2 & 1 & 1 & E & 2 & 2 & 2 \\
x=3 & 1 & E & 2 & 2 & 2 & 2 .
\end{array}
\]
Therefore \EFone{} holds exactly at the four balanced splits
\[
  (0,4),\qquad (1,3),\qquad (2,2),\qquad (3,1),
\]
where each agent receives four goods.

It remains to show that none of these four \EFone{} allocations is weakly
Pareto optimal. For each \EFone{} split, the table below gives the utility pair
\[
  \bigl(g_1(x,y),\,g_2(3-x,5-y)\bigr)
\]
and exhibits another split that strictly improves both agents.
\[
\begin{array}{c|c|c}
\text{\EFone{} split} & \text{utility pair} &
\text{strictly dominating split and utility pair} \\
\hline
(0,4) & (3+3\varepsilon,3+5\varepsilon) &
  (1,2)\text{ with }(3+4\varepsilon,3+6\varepsilon) \\
(1,3) & (3+5\varepsilon,3+3\varepsilon) &
  (0,5)\text{ with }(3+6\varepsilon,3+4\varepsilon) \\
(2,2) & (3+5\varepsilon,3+3\varepsilon) &
  (0,5)\text{ with }(3+6\varepsilon,3+4\varepsilon) \\
(3,1) & (3+\varepsilon,3+5\varepsilon) &
  (1,2)\text{ with }(3+4\varepsilon,3+6\varepsilon).
\end{array}
\]
Thus every \EFone{} allocation is strictly Pareto dominated by one of the
displayed allocations. Consequently, no allocation is both \EFone{} and
\wPO{}, and hence no allocation is both \EFone{} and \PO{}.

The same domination table gives a quantitative strengthening. In every row,
both agents improve by a factor of at least
\[
  \min\left\{
    \frac{3+4\varepsilon}{3+3\varepsilon},
    \frac{3+6\varepsilon}{3+5\varepsilon},
    \frac{3+4\varepsilon}{3+\varepsilon}
  \right\}
  =
  \frac{3+6\varepsilon}{3+5\varepsilon}.
\]
For \(\varepsilon=1/6\), this factor is \(24/23\). Hence, for this choice of
\(\varepsilon\), the instance has no allocation that is both \EFone{} and
\(\alpha\)-\PO for any \(\alpha>23/24\).
\end{proof}

\begin{remark}[Extending the \EFone{}+\PO{} obstruction to \(n\) agents]
The two-agent obstruction can be lifted to any number of agents by a simple
padding argument. This extension gives an obstruction to \EFone{}+\PO{}, but
not to \EFone{}+\wPO{}.
Starting from the two-agent instance in \Cref{thm:submodular-ce}, add one
private good \(p_i\) for each additional agent \(i=3,\ldots,n\). Agents \(1\)
and \(2\) assign zero marginal value to these private goods and retain their
original valuations over the original eight goods. Each agent \(i\ge 3\) only
values her private good, her value is \(1\) for any bundle containing \(p_i\)
and \(0\) otherwise. It is straightforward to see that no \EFone allocation is \PO on this instance.
\end{remark}

The padding argument above cannot rule out the existence of \(\EFone{}\) and \(\wPO{}\) allocations, since the added
agents can already receive their maximum possible value from their private
goods. To obtain a weak-Pareto obstruction for larger instances, we instead run
independent copies of the two-agent construction on disjoint pairs of agents.
This gives the following extension for every even number of agents.

\begin{corollary}
\label{cor:submodular-even-wpo}
For every even \(n\ge2\), there is an \(n\)-agent goods instance with monotone
submodular valuations such that no allocation is both \EFone and \wPO.
\end{corollary}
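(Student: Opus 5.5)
We take \(n=2k\) agents, paired as \((1,2),(3,4),\ldots,(2k-1,2k)\), and \(k\) disjoint copies \(M^1,\ldots,M^k\) of the eight-good instance from \Cref{thm:submodular-ce}. Agents \(2t-1\) and \(2t\) value only goods in \(M^t\): for \(S\subseteq \bigcup_s M^s\) set \(v_{2t-1}(S)=g_1(x(S\cap M^t),y(S\cap M^t))\) and \(v_{2t}(S)=g_2(x(S\cap M^t),y(S\cap M^t))\). Composing a monotone submodular function with the projection \(S\mapsto S\cap M^t\) keeps it monotone and submodular, since every marginal of a good outside \(M^t\) is zero and every marginal of a good in \(M^t\) equals the original marginal on the restricted set. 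So the valuations lie in the required class.

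Let \(X\) be an \EFone allocation. We show it is not \wPO by building an allocation \(Y\) that strictly improves every agent. Define for each pair \(t\) the \emph{restricted} two-agent allocation \(X^t=(X_{2t-1}\cap M^t,\,X_{2t}\cap M^t)\) on the copy \(M^t\). This is only a partial allocation of \(M^t\), because some goods of \(M^t\) may sit with agents outside the pair. The main obstacle is therefore the following step.

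\textbf{Step 1: each pair holds its whole copy.} Suppose the pair \(t\) misses a good \(g\in M^t\), held by an outsider \(j\). Agents \(2t-1\) and \(2t\) cannot both have utility \(3+6\varepsilon\) in that case; I would check this from the tables, where every bundle worth \(3+6\varepsilon\) has at least four goods. I would try a direct \wPO improvement: move \(g\) to a pair agent with positive marginal. But that hurts the outsider \(j\) weakly, not strictly, so it does not give a strict improvement for all. Hence we should not try to prove Step~1. Instead we build \(Y\) directly with a strict gain for everyone, so that losing irrelevant goods does not matter.

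\textbf{Step 2: the bound for each pair.} Each agent's utility depends only on her own copy. Since \(X\) is \EFone, \(v_{2t-1}(X_{2t-1})\ge v_{2t-1}(X_{2t}\setminus\{h\})\) for some \(h\). Within the pair, the restricted bundles are therefore \EFone, with agent \(2t-1\)'s bundle compared against \(X_{2t}\cap M^t\) minus one good (a removed good outside \(M^t\) only makes the comparison weaker, since it is worthless to her). If \(X^t\) is not a full split, add the missing goods of \(M^t\) to the two bundles. The key lemma, which I expect to need a finite table check like the \EFone grid in the proof of \Cref{thm:submodular-ce}, is this: for every partial two-agent allocation of \(M^t\) satisfying mutual \EFone, there is a full split of \(M^t\) giving both agents strictly more. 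For full splits this is exactly the domination table. For partial ones, the \EFone grid argument extended to subcounts \((x_1,y_1),(x_2,y_2)\) with \(x_1+x_2\le 3\), \(y_1+y_2\le5\) should again show the utilities are at most \((3+5\varepsilon,3+5\varepsilon)\)-type values dominated by \((1,2)\) or \((0,5)\). If the check fails for some partial configuration, I would fall back on perturbing \(\varepsilon\)-terms, but I expect it to hold by monotonicity.

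\textbf{Step 3: assembling \(Y\).} Give each pair \(t\) the dominating full split of \(M^t\) from Step~2. This defines an allocation of all goods in which every agent strictly improves, so \(X\) is not \wPO, proving \Cref{cor:submodular-even-wpo}.
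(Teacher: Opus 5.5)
Your architecture is sound: disjoint copies, the observation that restricting an \EFone{} allocation to a pair's own copy gives a mutually \EFone{} partial allocation (your remark about deleting an outside good is correct), and assembling $Y$ copy by copy. Giving outside goods value $0$ also works, and it is simpler than the paper's construction. The paper instead adds a tiny additive term $\eta|S\setminus M^p|$ and fixes $\varepsilon=1/6$, so that a gain of at least $1/6$ inside an agent's own copy outweighs everything she holds outside it. With zero outside values that bookkeeping is unnecessary.

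The gap is Step~2. Your key lemma is that every mutually \EFone{} partial allocation $(L_1,L_2)$ of a copy is strictly dominated by some full split. This is the heart of the corollary, and you leave it as an unperformed table check with a fallback plan. Moreover, the target you sketch would not suffice. No full split gives both agents strictly more than $3+5\varepsilon$: the dominating splits $(1,2)$ and $(0,5)$ give $(3+4\varepsilon,3+6\varepsilon)$ and $(3+6\varepsilon,3+4\varepsilon)$. So a bound like $u\le(3+5\varepsilon,3+5\varepsilon)$ proves nothing. You need one coordinate at most $3+3\varepsilon$ and the other below $3+6\varepsilon$.

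The missing idea is a cardinality reduction to the balanced case, which avoids any new enumeration.
\begin{enumerate}
\item By the \EFone{} grid in \Cref{thm:submodular-ce}, whenever one agent holds at least five goods of a full split, the other agent envies her after every one-good deletion.
\item Shrinking the envious agent's bundle to a subset of the complement only lowers her value, so the envy persists. Hence mutual \EFone{} forces $|L_1|,|L_2|\le 4$.
\item You can then add the unassigned goods of the copy to get a $4/4$ split $D$ with $D_a\supseteq L_a$.
\item Every $4/4$ split is one of $(0,4),(1,3),(2,2),(3,1)$, so it is strictly dominated by some split $Z$ from the domination table. Monotonicity gives $g_a(Z_a)>g_a(D_a)\ge g_a(L_a)$ for both agents.
\end{enumerate}
This is how the paper closes the argument; with it, your proof goes through.
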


\subsection{An Obstruction for Nearly Identical Valuations}

We next show that the existence of \EFone{} and \PO{} allocations is surprisingly
fragile. When all agents have identical submodular valuations, the two
requirements are compatible.

\begin{observation}[Identical submodular agents]\label{thm:identical-submodular}
If all agents have the same monotone submodular valuation \(v\), then every
leximin allocation is both \EFone{} and \PO{}.
\end{observation}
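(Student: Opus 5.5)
The plan is to prove the two properties separately for a fixed leximin allocation \(X\): Pareto optimality follows from leximin optimality alone, and \EFone{} follows from an exchange argument. The exchange argument is where identical valuations and submodularity are used.

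\textbf{Pareto optimality.} Suppose \(Y\) Pareto-dominates \(X\), so \(v(Y_i)\ge v(X_i)\) for all \(i\), with strict inequality for some \(i\).
\begin{enumerate}
\item Let \(u^{\uparrow}\) denote a utility vector sorted in nondecreasing order. Coordinatewise domination is preserved by sorting, so \(u^{\uparrow}(Y)\ge u^{\uparrow}(X)\) coordinatewise.
\item The sum of the entries is strictly larger for \(Y\), so the two sorted vectors are not equal.
\item A sorted vector that is coordinatewise \(\ge\) another and differs from it is lexicographically larger.
\end{enumerate}
This contradicts the leximin optimality of \(X\), so \(X\) is \PO.

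\textbf{EF1.} Suppose \(X\) is not \EFone. Then there are \(i,j\) with \(X_j\neq\emptyset\) and \(a:=v(X_i)<v(X_j\setminus\{g\})\) for every \(g\in X_j\). In particular, by monotonicity, \(b:=v(X_j)>a\).
\begin{enumerate}
\item \emph{Key step: find a good \(g\in X_j\) with \(v(X_i\cup\{g\})>v(X_i)\).} This is the one place submodularity is needed. Suppose instead that every \(g\in X_j\) has zero marginal value with respect to \(X_i\). Add the elements of \(X_j\) one at a time. By diminishing returns, each marginal is at most the marginal of that good with respect to \(X_i\), which is \(0\). Hence \(v(X_i\cup X_j)=v(X_i)=a\). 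But monotonicity gives \(v(X_i\cup X_j)\ge v(X_j)=b>a\), a contradiction.
\item \emph{Move the good.} Transfer this \(g\) from \(j\) to \(i\), leaving all other bundles unchanged. Agent \(i\)'s new value is \(v(X_i\cup\{g\})>a\). Agent \(j\)'s new value is \(v(X_j\setminus\{g\})>a\) by the envy assumption. Because the valuations are identical, these are exactly the values entering the leximin comparison.
\item \emph{Leximin comparison.} Relative to the old utility vector, we removed the two entries \(a\) and \(b>a\) and replaced them with two entries strictly above \(a\).
\begin{itemize}
\item All entries below \(a\) are unchanged.
\item The number of entries equal to \(a\) drops by exactly one, and no new entries \(\le a\) appear.
\end{itemize}
The sorted vectors therefore agree up to the position of the removed copy of \(a\). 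At that position the new vector has an entry strictly larger than \(a\). So the new allocation is lexicographically larger, contradicting the leximin optimality of \(X\).
\end{enumerate}

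The main obstacle is the key step of the \EFone{} argument. A naive transfer could leave agent \(i\) at value exactly \(a\) because of a zero marginal, and then the leximin comparison fails. Submodularity combined with monotonicity rules this out.
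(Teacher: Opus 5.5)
Your proof is correct and follows the paper's argument essentially step for step. Leximin implies \PO{} via the sorted-vector comparison. An \EFone{} violation is refuted by using submodularity and monotonicity to find a good in \(X_j\) with positive marginal with respect to \(X_i\), then transferring it from \(j\) to \(i\) to lexicographically improve the utility vector. Your explicit bookkeeping of the sorted vectors (the number of entries equal to \(a\) drops by exactly one) fills in a detail the paper leaves implicit.
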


For identical monotone valuations with nonzero marginals,
\citet{plaut2020almost} prove the stronger guarantee \EFX{}+\PO{}. The point
of \Cref{thm:identical-submodular} is that, for the weaker requirement
\EFone{}+\PO{} under submodularity, the nonzero-marginal assumption can be
dropped.

The next theorem shows that this positive result is extremely sensitive to even
the smallest asymmetry between agents. We construct a two-agent instance in
which the agents have the same valuation up to the names of two goods. Agent
\(2\)'s valuation is obtained from agent \(1\)'s valuation by swapping the
labels of two items. Thus, the two valuations have exactly the same structure.
They agree on every set that contains neither of the swapped items, and also on
every set that contains both of them; they can differ only on sets that contain
exactly one of the two swapped items. In this sense, the agents disagree only
about which of two item labels plays which role. Nevertheless, \EFone{} and
\PO{} are incompatible.

Valuations that differ only by a relabeling of the items have recently been
shown to create nontrivial difficulties in fair division,
\citet{mackenzie2026counterexamples} show that \EFX{} may fail to exist even in
this highly symmetric setting.

\begin{theorem}[A single swap of item labels can break existence]
\label{thm:permuted-identical}
There are two agents with monotone submodular valuations such that agent \(2\)'s
valuation is obtained from agent \(1\)'s valuation by swapping two item labels,
and yet no allocation is both \EFone{} and \PO{}.
\end{theorem}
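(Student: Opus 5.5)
The plan is to prove \Cref{thm:permuted-identical} the same way as \Cref{thm:submodular-ce}: write down an explicit small instance and check it by finite case analysis. The difference is that this time we fix a single valuation \(v\) on a ground set \(M\) and a transposition \(\sigma\) of two goods \(s,t\in M\). We then set \(v_1=v\) and \(v_2(S)=v(\sigma(S))\). By construction \(v_2\) is a relabeling of \(v_1\), so it is automatically monotone and submodular whenever \(v_1\) is. The agents can disagree only on bundles containing exactly one of \(s,t\).

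\textbf{Step 1: the shape of \(v\).} To keep verification tractable, we would let \(v\) depend on a small summary of the bundle, namely
\[
  v(S)=h\bigl(x(S),\,y(S),\,\mathbf 1[s\in S],\,\mathbf 1[t\in S]\bigr),
\]
where \(x\) and \(y\) count goods of two ordinary types \(A\) and \(B\), and \(s,t\) are the two distinguished goods. The role of \(s\) versus \(t\) should play the part of the asymmetry between the two tables in \Cref{thm:submodular-ce}. Concretely:
\begin{itemize}
  \item holding \(s\) (but not \(t\)) should shift \(v\) toward something like \(g_1\);
  \item holding \(t\) (but not \(s\)) should shift \(v\) toward something like \(g_2\).
\end{itemize}
Under \(\sigma\) these roles exchange, so agent \(2\), who effectively holds the ``other'' special good in the relevant splits, sees a \(g_2\)-like landscape.

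\textbf{Step 2: what the instance must achieve.} As in the main proof, we aim for the following:
\begin{enumerate}
  \item every bundle is worth at most \(c+O(\varepsilon)\), for a constant \(c\);
  \item any roughly half-size bundle is worth at least \(c\);
  \item \EFone{} then forces a balanced split;
  \item the \(\varepsilon\)-perturbations are tuned so that each balanced split is Pareto dominated (in fact strictly, for both agents) by an unbalanced one.
\end{enumerate}

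\textbf{Step 3: verification.} This proceeds in the same order as the proof of \Cref{thm:submodular-ce}.
\begin{enumerate}
  \item Check monotonicity and submodularity of \(h\). Because \(v\) depends only on the counts and the two indicators, it suffices to check that all marginals (of an \(A\)-good, a \(B\)-good, \(s\), and \(t\)) are nonnegative and nonincreasing in every coordinate. These can be tabulated as in Appendix~\ref{app:marginals}.
  \item Enumerate the splits. A split is given by agent \(1\)'s counts \((x,y)\) together with which of \(s,t\) agent \(1\) holds (four cases). For each split, test \EFone{} for both agents. For each agent, the candidate deletions are one good of each type present in the other agent's bundle, including deleting \(s\) or \(t\). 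This yields a grid of \(E/1/2\) entries analogous to the one in the main proof.
  \item For each \EFone{} split, exhibit a split that Pareto dominates it.
\end{enumerate}

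\textbf{Main obstacle: finding \(h\).} The hard part is finding numbers that make submodularity, the \EFone{} pattern, and the domination pattern hold simultaneously, now with the added rigidity that agent \(2\)'s table is a relabeling of agent \(1\)'s rather than an independent table. I would first try to embed the two tables \(g_1,g_2\) of \Cref{thm:submodular-ce} as the \(s\)-slice and the \(t\)-slice of \(h\). The slice containing both \(s\) and \(t\) would be a near-flat cap, and the slice containing neither would be a pointwise lower envelope, chosen so that the marginals of \(s\) and \(t\) remain diminishing. If hand-fitting fails, I would set up the conditions as a linear feasibility program in the entries of \(h\), as the AI disclosure suggests was done for the main instance:
\begin{itemize}
  \item submodularity and monotonicity become linear inequalities;
  \item the \EFone{} and domination patterns become disjunctions, which I would fix by guessing the pattern and then solving the resulting LP.
\end{itemize}
Once a feasible rational solution is found, the proof reduces to displaying \(h\) and its marginals and checking the finite tables.
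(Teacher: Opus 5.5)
\textbf{There is a genuine gap: no instance is ever given.} The statement is purely existential, so a proof \emph{is} an explicit valuation plus a finite check, and your proposal supplies neither. You describe a search space (functions of two type counts and the indicators of $s,t$) and a search procedure (hand-fitting, then guessing an \EFone/domination pattern and solving an LP). But you never exhibit $h$, never show the LP would be feasible, and give no argument that your restricted family contains a solution at all. You flag finding $h$ yourself as the unresolved obstacle, and that obstacle is the whole theorem. For comparison, the paper uses six goods $a,\dots,f$ with agent~2 swapping $d$ and $e$, and writes down every value of $v$. Its four unswapped goods have distinct singleton values $10,9,8,6$, so the example is not of your two-type form. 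It also verifies in the opposite direction from \Cref{thm:submodular-ce}: the only Pareto-optimal utility pairs are $(20,22)$ at $ad\mid bcef$ and $(22,20)$ at $bcdf\mid ae$, and neither allocation is \EFone.

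\textbf{The design heuristic rests on a miscalculation.} Since $v_2(S)=h(x,y,\mathbf{1}[t\in S],\mathbf{1}[s\in S])$, consider a split where agent~1 holds $s$ and agent~2 holds $t$. Both agents then evaluate their \emph{own} bundles through the same slice $h(\cdot,\cdot,1,0)$, and each other's bundles through $h(\cdot,\cdot,0,1)$. The two exchanges cancel, so agent~2 does not see a $g_2$-like landscape. If one agent holds both $s$ and $t$, the two valuations even agree on both bundles. So placing $g_1$ and $g_2$ on the $s$- and $t$-slices does not reproduce the cross-agent asymmetry that drives \Cref{thm:submodular-ce}. A literal embedding is impossible anyway: $h(0,0,1,0)=g_1(0,0)=0$ gives $s$ zero marginal everywhere by monotonicity and submodularity, likewise for $t$, which forces $g_1=g_2$.

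\textbf{Your domination target is stronger than needed.} Requiring every balanced \EFone split to be dominated strictly for both agents would give an \EFone{}+\wPO obstruction. That is more than the theorem asks and more than the paper's instance delivers. There, $bcf\mid ade$ has utilities $(21,20)$, is \EFone, and is weakly Pareto optimal; it is dominated only weakly, by $bcdf\mid ae$. Insisting on strict domination under the rigid relabeling constraint may make your search infeasible for no reason.
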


Thus, the existence of \EFone{} and \PO{} allocations is not robust, even a single swap of item labels can destroy existence.

\section{Structured Subclasses: Barriers and Positive Results}\label{sec:subclasses}

The previous section shows that submodularity alone does not guarantee the compatibility of \EFone{} and \PO{}. This leaves a more refined question: how much additional structure must a submodular valuation have before the additive compatibility result becomes plausible again? We turn to weighted matroid-rank valuations, a canonical and highly structured subclass of gross substitutes, as a natural frontier.

This section gives two complementary messages. First, even in the common-weight matroid-rank setting, the strongest efficiency requirements are too ambitious. \EFone{} can be incompatible with \fPO{}, or equivalently, every allocation that is maximizes some positive weighted-welfare violate \EFone. 
Thus the standard market-style or weighted-welfare route cannot prove the desired result in a black-box way. Second, this obstruction is methodological rather than existential for the structured cases we identify. We introduce a \emph{common-envelope condition} and show that every instance satisfying this condition admits an allocation that is both \EFone{} and \PO{}. Since common-weight matroid-rank valuations satisfy the common-envelope condition, this yields a new positive result for that class. We therefore conjecture that the \EFonePO{} guarantee extends to arbitrary \WMR valuations, even though the stronger \EFonefPO{} target fails.

\begin{conjecture}[\WMR frontier]\label{conj:wmr-ef1-po}
Every \WMR valuation instance admits an allocation that is both \EFone and \PO.
\end{conjecture}

The rest of the section develops this frontier. We first present the fractional-efficiency obstruction, and then prove the common-envelope positive result.

\subsection{Why Fractional-Efficiency Methods Are Insufficient}

By \Cref{prop:fpo-welfare}, \fPO allocations are exactly the deterministic
allocations supported by strictly positive weighted-welfare maximizers.  Thus a
proof that always selects an \EFone allocation by maximizing a positive weighted
welfare, or by first proving fractional Pareto optimality, would prove
\EFone{}+\fPO.  The following two results show that this target is already too
strong.

\begin{theorem}[\CWMR failure for \EFone+ \fPO]\label{thm:oxs-fpo-ce}
There is a two-agent \CWMR instance in which every \EFone allocation fails \fPO.  Moreover, viewed as a profile of set
functions, the obstruction is robust: there is \(\delta_0>0\) such that every
profile \((\tilde v_1,\tilde v_2)\) with
\[
  \max_{i\in\{1,2\}}\max_{S\subseteq M}
  |\tilde v_i(S)-v_i(S)|<\delta_0
\]
also has no allocation that is both \EFone and \fPO.
\end{theorem}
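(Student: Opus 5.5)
Since this is an existence statement, the plan is to build a small explicit two-agent common-weight matroid-rank instance and check it by finite case analysis, in the style of \Cref{thm:submodular-ce}. A natural starting point uses a few goods with shared weights (for instance one heavy good $h$ of weight $W$ and several light goods of weight $1$) and matroids given by partition or uniform structures. The matroids are chosen so that each agent can only ``use'' a limited number of goods of a given kind, and the two agents' constraints differ. By \Cref{prop:fpo-welfare}, an allocation is \fPO exactly when it maximizes $\lambda_1 v_1(Y_1)+\lambda_2 v_2(Y_2)$ for some $\lambda_1,\lambda_2>0$; after normalizing, this is a one-parameter family $\lambda=\lambda_1/\lambda_2\in(0,\infty)$. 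The argument therefore reduces to listing the distinct utility vectors (the Pareto frontier is a finite set) and determining which of them lie on the upper convex hull of the achievable utility set. Only those vectors, at their supporting slopes, can be \fPO.

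The key steps, in order, are as follows. (i) Specify the instance and verify that it is \CWMR{}; this holds by construction if we simply define each agent's matroid and the common weights $w$. (ii) Enumerate the allocations up to symmetry among identical goods, and record each utility pair. (iii) Identify the \EFone allocations by direct checks, exactly as in the grid argument for \Cref{thm:submodular-ce}. (iv) Show that every \EFone utility pair is either Pareto dominated or lies strictly below the segment joining two other achievable utility pairs, so that a lottery over those two allocations dominates it. The design goal is a frontier in which the \EFone points are non-extreme (a ``dent''). Such a dent is possible even though matroid-rank utilities are well behaved, because fairness forces a compromise point. The extreme points, by contrast, are the lopsided allocations that welfare maximization selects. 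For these, \EFone fails: the agent holding less envies the other even after removing the heavy good, because the bundle retains many useful light goods.

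For robustness, we use that every inequality in steps (iii) and (iv) is strict. Each non-\EFone allocation violates \EFone with a strict inequality, meaning the envy after every single-good removal is positive. Each \EFone allocation is beaten by a convex combination with a strict gap in both coordinates (or strict dominance). Let $\gamma>0$ be the minimum slack over these finitely many strict inequalities. Each inequality involves at most about four valuation values, so any perturbation with $\delta_0<\gamma/4$ preserves all of them. In particular, the set of \EFone allocations can only shrink, and each surviving one remains dominated by the same lottery. This requires stating the fractional domination in step (iv) with both coordinates strict, which I will arrange by choosing the weights so that the dented points sit strictly below the chord.

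The main obstacle is finding the instance itself: arranging the matroids so that \EFone forces a balanced allocation whose utility vector lies strictly inside the convex hull, while remaining within common weights. I would first try two agents with different partition matroids over light goods of two ``colors'' plus one heavy good, tuning the counts and $W$ by a small computer search, as the authors suggest they did. I would then verify the chosen instance by hand.
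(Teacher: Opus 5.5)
Your verification framework is correct and matches the paper's. By \Cref{prop:fpo-welfare}, an allocation is \fPO{} only if it maximizes $r\,u_1+u_2$ for some $r>0$. So one lists the finitely many utility pairs, checks \EFone{} directly, and beats each \EFone{} allocation by a lottery with strict slack in \emph{both} coordinates. Robustness then follows from a uniform slack over finitely many strict inequalities. (Your $\delta_0<\gamma/4$ is safe but loose: $\delta_0<\gamma/2$ suffices, since each comparison, including one against a fixed lottery, moves by less than $2\delta$.)

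The genuine gap is that you never produce the instance. The theorem is an existence statement, and its entire content is an explicit Common Weighted Matroid Rank profile together with its finite verification. Deferring this to ``a small computer search'' leaves nothing proved. Nothing in your sketch shows that your template (one heavy good plus two colours of light goods) yields \EFone{} points strictly inside the convex hull while every supported extreme point fails \EFone{}.

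The missing idea in the paper is a crossed pairing of \emph{two} heavy goods with two light goods. Take $M=\{a,b,c,d\}$ with $w(a)=w(b)=1$ and $w(c)=w(d)=3$. Agent~1 has the partition matroid with capacity-one blocks $\{a,c\},\{b,d\}$, and agent~2 has the one with blocks $\{a,d\},\{b,c\}$. Both agents value $\{c,d\}$ at $6$ and $\{a,b\}$ at $2$, so $\{a,b\}\mid\{c,d\}$ and $\{c,d\}\mid\{a,b\}$ give $(2,6)$ and $(6,2)$, each with total $8$. If the heavy goods are split, the crossing makes exactly one light good useless to both agents, because it shares a block with each agent's own heavy good. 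So the total is at most $7$, and every other allocation also has total at most $7$. A direct check then shows the following. For $r\ne1$ the unique maximizer of $r\,u_1+u_2$ is one of the two lopsided allocations, and at $r=1$ they tie. Hence these are the only \fPO{} allocations. Both fail \EFone{}: after deleting one heavy good, the envious agent still values the remaining one at $3>2$. So the surviving envy comes from a second heavy good, not from many light goods as your heuristic suggests.

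The \EFone{} allocations are exactly those with utilities $(3,4)$ or $(4,3)$. They are dominated by the lotteries with weights $(5/8,3/8)$, resp.\ $(3/8,5/8)$, on the two lopsided allocations, which give $(7/2,9/2)$, resp.\ $(9/2,7/2)$, a slack of $1/2$ in each coordinate. Every non-\EFone{} allocation fails with margin at least $1$, so any $\delta_0<1/4$ yields the robustness claim.
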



The theorem shows that the counterexample is robust to perturbations. Hence, the non-existence result is not merely an artifact of a degenerate instance; rather, it reflects a genuine obstruction that persists on a set of generic instances.

We also show that \EFonefPO{} can fail for additive mixed manna, even under the
standard mixed-manna convention for \EFone{}.  Thus, weighted-welfare
maximization can be too strong a target in signed allocation settings as well.

\begin{theorem}[Robust mixed-manna failure for \fPO]\label{thm:mixed-fpo-ce}
There is a three-agent, four-item additive mixed-manna profile \( ( v_1,v_2,v_3 )\) and a
constant \(\delta_0>0\) such that every additive valuation profile \(( \tilde v_1,\tilde v_2,\tilde v_3 )\) satisfying
\[
  \max_{i\in N}\max_{o\in M}
  |\tilde v_i(o)-v_i(o)|<\delta_0
\]
has no allocation that is both \EFone{} and \fPO{}.  Moreover, the same fixed
allocation is \EFonePO{} throughout this neighborhood.
\end{theorem}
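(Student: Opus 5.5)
The plan is to give an explicit profile and verify everything through finitely many \emph{strict} inequalities. Strictness is what makes the robustness claim automatic. With three agents and four items there are only $3^4=81$ allocations, so each claim reduces to a finite check. I will choose the instance so that every relevant comparison is strict and has a margin, and then take $\delta_0$ to be a small fraction (say one eighth) of the smallest margin. Since each bundle value is a sum of at most four item values, a perturbation below $\delta_0$ moves every bundle value by less than $4\delta_0$, and so it cannot flip any strict inequality used below.

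The instance itself will be found by a computer-assisted search over small integer valuations, mixing goods and chores. It should satisfy three requirements.

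\textbf{(i) Few \EFone allocations.} Only a handful of allocations are \EFone under the mixed-manna convention. Every other allocation has some ordered pair $(i,j)$ for which every removal is strictly insufficient, both removing a good from $X_j$ and removing a chore from $X_i$. This condition is open, so near $v$ the set of \EFone allocations is a subset of this handful.

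\textbf{(ii) No \EFone allocation is \fPO, robustly.} For each of these \EFone allocations $X$, I will exhibit a certificate against \fPO. The certificate is a lottery over allocations that strictly improves \emph{every} agent, not merely one. The natural way to get it is a trading cycle in the exchange graph. Agent $i_1$ passes item $o_1$ to $i_2$, $i_2$ passes $o_2$ to $i_3$, and so on around the cycle, and the product of the value ratios $\frac{v_{i_{k+1}}(o_k)}{v_{i_k}(o_k)}$ (with the appropriate sign convention for chores) is strictly greater than $1$. Alternatively, I will write the certificate directly as a convex combination of two or three allocations. With two agents and additive valuations, a cycle with ratio product strictly above $1$ can be rescaled so that everyone gains strictly. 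Once I have such a strict Pareto improvement in expectation, it survives small perturbations. So by \Cref{prop:fpo-welfare}, no nearby \EFone allocation is a positive weighted-welfare maximizer.

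\textbf{(iii) A fixed \EFonePO{} allocation $X^*$.} There is one allocation $X^*$ that is \EFone with strict slack and is \PO in a strict sense: every other allocation $Y$ with $v(Y)\neq v(X^*)$ makes some agent strictly worse off. To keep this open, I will choose generic values so that no two allocations give the same utility to the same agent unless they give that agent the same bundle. Allocations that give every agent the same utilities as $X^*$ can be excluded by genericity. Then \PO of $X^*$ persists throughout the neighborhood, and $X^*$ sits among the \EFone allocations of (i), which are all non-\fPO by (ii).

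The main obstacle is the design step of finding the numbers. The allocation $X^*$ must be \PO but not \fPO. This means $X^*$ is dominated by some lottery but by no deterministic allocation. With additive valuations, that forces the improving trade cycle for $X^*$ to involve fractional transfers that cannot be realized integrally. Three agents and chores are what make this possible: one cycle moves a chore and a good in opposite directions, and no integral rounding of it keeps everyone weakly better off. I would first try an instance with two goods and two chores, where $X^*$ gives each agent a chore or a good arranged in a cyclic pattern. I would then brute-force the remaining entries so that the other \EFone allocations each admit an integral improving cycle while $X^*$ admits only a fractional one. Finally, I would list the 81 allocations with their \EFone status and their certificates in a table, as in the proof of \Cref{thm:submodular-ce}.
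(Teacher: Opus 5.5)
\textbf{The gap: no instance is ever given.} The theorem is purely existential, and the proposal never produces the profile. Everything that is actually hard is deferred to a computer search whose success is assumed, not shown. That hard part is showing requirements (i)--(iii) can hold together. In particular, some \EFone{} allocation must be undominated by every deterministic allocation, yet strictly dominated for all three agents by a lottery, while every other \EFone{} allocation is dominated too. As written, the proposal restates what an instance would need to satisfy and adds a search plan; it is not a proof.

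Your robustness framework itself is sound and matches the paper's mechanism. Failing \EFone{} is a finite system of strict inequalities, so it is open. A lottery that strictly improves \emph{every} agent is an open certificate against \fPO{}. By a theorem of the alternative (Gordan), it is equivalent to the paper's statement that no nonzero nonnegative weight vector supports the allocation; the paper then extends this to a neighborhood by compactness. \PO{} of $X^*$ persists once every competing allocation makes some agent strictly worse off.

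Two smaller points. A single trading cycle only improves the agents on it, so with three agents you generally must route surplus to the off-cycle agent. In the paper's instance this means passing a fraction of $g$ from $a_2$ to $a_3$. Also, the sentence about ``two agents'' does not fit this three-agent setting.

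\textbf{What the missing construction looks like in the paper.} It uses an explicit instance whose \EFone{} allocations can be classified by hand, with one good and three chores rather than two of each. Start from the degenerate profile $v^0$ on $(g,c_1,c_2,C)$ with $a_1\colon(3,-1,-1,-4)$ and $a_2,a_3\colon(5,-2,-2,-5)$.
\begin{itemize}
\item \EFone{} forces $g$ and the big chore $C$ to a common holder, and forbids either nonholder from taking both small chores. Each violation has margin at least $1$.
\item An additive weighted-welfare maximizer gives each item to an agent with maximal weighted value. Itemwise comparisons then force every supporting nonnegative weight vector to vanish. For example, if $a_1$ holds $g$ and $C$, then $3\lambda_1\ge5\lambda_2$ and $4\lambda_1\le5\lambda_2$.
\item The paper then perturbs to $v^\varepsilon$ so that all values are distinct.
\item It checks directly that giving every item to $a_2$ stays \EFone{} and \PO{} throughout an $\varepsilon$-neighborhood. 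This includes the near-tight set $\{g,C\}$, where $v^\varepsilon_{a_2}(\{g,C\})=2\varepsilon$.
\end{itemize}
This $X^*$ is not your guessed cyclic pattern. The fractional improvement comes from the two agents with empty bundles taking fractions of $g$ together with cheap chores. Your proposal needs an instance of this kind, together with these verifications.
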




\subsection{Common Envelopes}\label{sec:common-envelope}

The positive result in this section isolates a setting in which agents may
differ in their feasibility constraints while still evaluating feasible sets
on a common scale.  Let \(g:2^M\to\RR_{\ge0}\) be a normalized monotone set
function, which we call the \emph{envelope}.  A family
\(\cC\subseteq 2^M\) is \emph{downward closed} if
\(A\in\cC\) and \(B\subseteq A\) imply \(B\in\cC\).  Each agent \(i\) has a
downward-closed family \(\cC_i\subseteq 2^M\) of \emph{certificates}, these
are the sets that agent \(i\) can realize at their full envelope value.  The
value of a bundle is the envelope value of the best certificate contained in it:
\[
  v_i(S)=\max\{g(T):T\subseteq S,\ T\in\cC_i\}.
\]

We focus on the class of instances for which each valuation \(v_i\) defined
in this way is submodular.  This class is already nontrivial, it contains the
instance of \Cref{thm:oxs-fpo-ce}, and therefore \EFone{}+\fPO can fail.
Plain leximin is not by itself an adequate selection rule either, there exists an instance where a leximin-optimal allocation violates \EFone.  Nevertheless, we show
that with the right tie-breaking one can choose an allocation that is still
leximin-optimal among full allocations.  The proof first chooses disjoint
certificates, viewed as a partial allocation, whose envelope values are
lexicographically maximum, and only then assigns the residual goods in a
careful way.

\begin{theorem}\label{thm:common-envelope}
Let \(g:2^M\to\RR_{\ge0}\) be monotone.  For each agent
\(i\in N\), let \(\cC_i\subseteq 2^M\) be a downward-closed family, and define
\[
  v_i(S)=\max\{g(T):T\subseteq S,\ T\in\cC_i\}.
\]
If each \(v_i\) is submodular, then there is an allocation that is both
\EFone and \PO{}.  Moreover, such an allocation can be chosen to be leximin-optimal.
\end{theorem}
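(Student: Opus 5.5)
Choose a partial allocation of disjoint certificates $(T_1,\dots,T_n)$ with $T_i\in\cC_i$ whose envelope vector $(g(T_i))_i$, sorted in nondecreasing order, is lexicographically maximum among all such tuples. Among these, fix a secondary tie-breaking potential, for instance minimizing $\sum_i |T_i|$, so that every certificate is inclusion-minimal for its value. Then distribute the residual goods $R=M\setminus\bigcup_i T_i$ and argue about the resulting allocation $X$.

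Leximin optimality and \PO{} come from the certificate level. For any allocation $Y$, each $v_i(Y_i)$ is attained by some certificate $T'_i\subseteq Y_i$ with $T'_i\in\cC_i$, and these certificates are disjoint. Hence the achievable utility vectors of full allocations coincide with the envelope vectors of disjoint certificate tuples, since leftover goods can be handed out arbitrarily and monotonicity keeps values weakly higher. Because $v_i(X_i)\ge g(T_i)$ and the leximin maximum is already attained at the certificate level, $X$ is leximin-optimal as soon as $v_i(X_i)=g(T_i)$, or more generally as soon as the residual assignment does not break optimality. Leximin-optimal allocations are \PO{}: a Pareto improvement would be a sorted-vector lexicographic improvement. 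So \PO{} follows once leximin optimality among full allocations is shown.

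The main work is assigning $R$ so that \EFone{} holds. The key structural claim is a no-envy-on-certificates property. Suppose $i$ envies $j$ even after removing one good from $X_j$. Then $v_i(X_j\setminus\{h\})>v_i(X_i)\ge g(T_i)$, so some $T'\subseteq X_j\setminus\{h\}$ with $T'\in\cC_i$ has $g(T')>g(T_i)$. Since $\cC_i$ is downward closed and certificates share the common envelope $g$, I would perform an exchange. Give $i$ the set $T'$ (or a subset of it), and give $j$ the set $T_j$ minus an element, or a suitable subset of $T_j\cup$ residue that still lies in $\cC_j$. Submodularity of $v_j$ bounds the loss to $j$ from removing a single good, which should let me show this exchange lex-improves the certificate vector or contradicts the tie-breaking. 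The key comparison is $g(T_j)$ versus $g(T')$: if $g(T_j)\ge g(T')>g(T_i)$, the transfer raises the smaller coordinate.

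To make this work I would allocate residual goods greedily, one at a time, to an agent who is currently not envied, in the style of an envy-cycle elimination. A residual good has zero marginal value at the certificate level only if it cannot enlarge any optimal certificate. Envy cycles are resolved by rotating bundles, which preserves the multiset of utilities and hence leximin optimality. I expect the main obstacle to be exactly here: the rotation moves certificates of $\cC_j$ to an agent $i$ for whom the bundle may not contain a $\cC_i$-certificate of equal value. I would try first to restrict envy edges to the certificate level, using that $i$ envies $j$ only if $X_j$ contains a $\cC_i$-certificate worth more than $v_i(X_i)$. Then rotating along a cycle gives each agent a strictly better certificate, which contradicts lexicographic maximality. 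Cycles therefore never arise, and the envy graph stays acyclic, so a source always exists to receive the next residual good.
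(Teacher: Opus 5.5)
Your leximin and \PO{} reasoning is sound. Extracting a best certificate from each bundle of an arbitrary allocation shows that the leximin optimum over full allocations equals the optimum over certificate tuples. It also forces $v_i(X_i)=g(T_i)$ for every extension $X\supseteq T$, so residual goods are automatically worthless on top of the certificates. Your acyclicity argument is exactly the paper's: an envy cycle yields disjoint, strictly better certificates for the cycle agents, contradicting lexicographic maximality.

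The gap is the \EFone{} step. This is the heart of the theorem, and you only say the exchange ``should'' work. The exchange you sketch fails. The envied set $T'\subseteq X_j\setminus\{h\}$ can contain most of $T_j$, so if $i$ takes $T'$, agent $j$ cannot keep $T_j$ minus one element. Agent $j$ is left with roughly $T_j\setminus T'$, whose envelope value can be far below $g(T_i)$, and no lexicographic improvement follows. Submodularity of $v_j$ is also the wrong tool: it gives no comparison between $j$'s reduced value and $g(T_i)$, which is what the lexicographic argument needs.

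The missing idea is to transfer a \emph{single} good, chosen via submodularity of $v_i$, and to bound $j$'s remaining value through the common envelope. Work with the certificate tuple $T$ itself, and suppose $i$ violates \EFone{} toward $j$.
\begin{itemize}
\item Then $v_i(T_i\cup T_j)\ge v_i(T_j)>v_i(T_i)$, so submodularity of $v_i$ gives some $e'\in T_j$ with $v_i(T_i\cup\{e'\})>v_i(T_i)$.
\item Give $i$ a best $\cC_i$-certificate inside $T_i\cup\{e'\}$. Let $j$ keep $T_j\setminus\{e'\}$, which lies in $\cC_j$ by downward closure.
\item Since $g$ is monotone, $v_i\le g$ pointwise. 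Hence $g(T_j\setminus\{e'\})\ge v_i(T_j\setminus\{e'\})>v_i(T_i)=g(T_i)$. This is precisely where the common scale is used.
\item Both modified coordinates now lie strictly above the old value $g(T_i)$, and $i$'s coordinate strictly increased. So the sorted envelope vector lexicographically improves, a contradiction.
\end{itemize}

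With \EFone{} established for $T$, your rule of giving each residual good to a source of the envy graph preserves \EFone{}. Nobody envies the source before the addition, and no agent's value changes. This completes the argument. Deriving the contradiction directly for the final $X$, with residual goods mixed into $X_j$, does not go through cleanly. There, $i$'s witness inside $X_j\setminus\{h\}$ may use residual goods that $j$ cannot certify, so the envelope comparison above is unavailable.
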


\begin{proof}
Call a partial allocation \(P=(P_1,\ldots,P_n)\) a  \textit{certificate} allocation if
the bundles are disjoint and \(P_i\in\cC_i\) for every \(i\). Among all such certificate allocations, let \(P^*\) be a leximin-optimal one. 
That is, \(P^*\) is  a certificate allocation  which maximizes the value given to the worst-off agent, and subject to that, maximizes the value of the second-worst-off agent, and so on. Note that as each $P^*_i\in \cC_i$, the agents' values are given by the envelope function \(g\) evaluated on their certificate,  \( v_i(P^*_i) = g(P^*_i) \).

\bigskip

\noindent\textit{\(P^*\) is satisfies \EFone.} Suppose on the contrary that \(P^*\) is not \EFone. Then there exist agents \(i\) and \(j\) such that agent \(i\) envies agent \(j\)'s bundle even after the removal of any single good from it. This means that for every good \(e \in P^*_j\), we have \(v_i(P^*_j\setminus\{e\})>v_i(P^*_i)\). 

First, observe that there must exist some good \(e\in P^*_j\) that has positive marginal value for agent \(i\). If not, by submodularity, we have
\begin{align*}
  0 &= \sum_{e\in P^*_j} v_i (P^*_i \cup \{e\}) -v_i (P^*_i) \\ 
    & \geq  v_i(P^*_i\cup P^*_j) - v_i(P^*_i). 
\end{align*}
By monotonicity, we have \(v_i(P^*_i\cup P^*_j) \geq v_i(P^*_i)\), and hence $v_i(P^*_i\cup P^*_j) = v_i(P^*_i)$. But this contradicts the assumption that agent $i$ envied agent $j$. Thus, there must exist a good $ e' \in P^*_j$ with positive marginal value for agent $i$.

Consider such a good \(e'\in P^*_j\) with positive marginal value for agent \(i\). Let \(P'_i \subseteq P^*_i \cup \{e'\}\) be a certificate that attains the value \(v_i(P^*_i \cup \{e'\})\), so that \(g(P'_i) = v_i(P^*_i \cup \{e'\}) > v_i(P^*_i)\). 
Note that the new bundles \(P'_i\) and \(P^*_j\setminus\{e'\}\) are still disjoint, and \(P'_i \in \cC_i\) and \(P^*_j\setminus\{e'\} \in \cC_j\) by downward closure. Thus, we may consider a new certificate allocation \[\widetilde{P}_k = \begin{cases} P'_i  & \text{if } k=i \\ 
P^*_j\setminus\{e'\} & \text{if } k=j \\
P^*_k & \text{otherwise}
  \end{cases}.\]
We now compare the envelope values of the new allocation \(\widetilde{P}\) with those of \(P^*\). Observe that for every agent \(k\notin \{i,j\}\), we have \(v_k(\widetilde{P}_k) = v_k(P^*_k)\). On the other hand \(v_i(P'_i) > v_i(P^*_i)\), and  \(  v_j (P^*_j\setminus\{e'\}) =g(P^*_j\setminus\{e'\}) \ge v_i(P^*_j\setminus\{e'\})>v_i(P^*_i)\), it follows that \(\widetilde{P}\) lexicographically dominates \(P^*\). This contradicts the leximin-optimality of \(P^*\), and hence \(P^*\) satisfies \EFone.

\bigskip

\noindent\textit{Extending to complete allocation.}  Let \(R=M\setminus\bigcup_i P^*_i\) be the residual goods. These goods are worthless on top of the chosen certificates:
\begin{equation}\label{eq:1}
      v_i(P^*_i\cup X)=v_i(P^*_i)
  \qquad
  \text{for every } i \text{ and every } X\subseteq R.
\end{equation}
Indeed, if \(v_i(P^*_i\cup X)>v_i(P^*_i )\), then some certificate
\(T\subseteq P_i\cup X\) has \(g(T)>v_i(P^*_i )\).  Replacing \(P_i\) by \(T\) and
leaving all other certificates unchanged gives a disjoint certificate
allocation that improves \(i\) and hurts no one, contradicting the lexicographical optimality of \(P^*\). 

We now extend the possibly partial allocation \(P^*\) to a complete allocation by assigning the residual goods \(R\) to the agents in a way that preserves \EFone. We iteratively assign the residual goods to the agents, while maintaining a partial allocation \(A=(A_1,\ldots,A_n)\) such that each \(A_i=P^*_i\cup X_i\) for some disjoint \(X_i\subseteq R\). Note that by \eqref{eq:1}, the value of each agent \(i\) remains \(v_i(A_i)=v_i(P^*_i)\) throughout this process. Define the envy graph of \(A\) by putting an edge \(i\to j\) whenever \(v_i(A_j)> v_i(A_i)=v_i(P^*_i)\).

Observe that whenever \(A\) is a partial allocation satisfying \(A_i = P^* \cup X_i\), the envy graph is acyclic.  If there were a directed cycle \(i_1\to i_2\to\cdots\to i_k\to i_1\), then for each \(t\), agent \(i_t\) would have a certificate \(T_{i_t}\subseteq A_{i_{t+1}}\) with
\(g(T_{i_t})=v_{i_t}(A_{i_{t+1}})>v_{i_t}(A_{i_{t}})=v_{i_t}(P^*_i)\), where indices are modulo \(k\).
These certificates are disjoint because the bundles \(A_{i_{t+1}}\) are
disjoint.  Giving each cycle agent \(i_t\) the certificate \(T_{i_t}\), and
letting every agent outside the cycle keep her original \(P^*_i\), gives a
certificate allocation that strictly improves all cycle agents and hurts no
one. This however contradicts the leximin-optimality of \(P^*\).  Hence, the envy graph is acyclic whenever each agent \(i\) receives a bundle of the form $A_i=P^*_i\cup X_i$.

Initialize the partial allocation \(A^0_i=P^*_i\) for all \(i\). The envy graph of \(A^0\) is acyclic, and \(A^0\) is \EFone. We now inductively show that we can assign the residual goods \(R\) to the agents while maintaining \EFone.
At each step, pick a remaining residual good \(e\in R\) and assign it to an agent \(s\) who is the source node in the envy graph. Since \(s\) is a source, adding \(e\) to \(A_s\) does not create \EFone envy from any other agent.  Update the partial allocation to \(A^{t+1}_s=A^t_s\cup \{e\}\) and \(A^{t+1}_i=A^t_i\) for all \(i\neq s\). Assuming the allocation $A^{t}$ preserves $\EFone$, the allocation $A^{t+1}$ preserves $\EFone$.
The envy graph of \(A^{t+1}\) remains acyclic, since the allocation is of the form \(A^{t+1}_i=P^*_i\cup X_i\) for some disjoint \(X_i\subseteq R\),
and the value of each agent remains unchanged, \(v_i(A^{t+1}_i)=v_i(P^*_i)\).   Repeat this process until all residual goods have been assigned. Denote the final allocation by \(\widetilde{A}\).  By construction, \(\widetilde{A}\) is a complete allocation that preserves \EFone and satisfies $v_i (\widetilde{A}_i)=v_i(P^*_i)$.

We now show that \(\widetilde{A}\) is leximin-optimal among all complete allocations.  Suppose not, and let \(B\) be a complete allocation that lexicographically dominates \(\widetilde{A}\).  Then there exists an agent \(i\) such that \(v_i(B_i)>v_i(\widetilde{A}_i)=v_i(P^*_i)\).  This implies that \(B_i\) contains a certificate \(T_i\subseteq B_i\) with \(g(T_i)>g(P^*_i)\).  Let \(Q=(Q_1,\ldots,Q_n)\) be the certificate allocation obtained by replacing \(P^*_i\) with \(T_i\) and leaving all other certificates unchanged.  Then \(Q\) lexicographically dominates \(P^*\), contradicting the leximin-optimality of \(P^*\). Thus, \(\widetilde{A}\) is leximin-optimal among all complete allocations, and hence is \PO.

\end{proof}

\begin{remark}
The common envelope condition is essential, the \EFone step compares agent \(i\)'s
value for \(P^*_j\setminus\{e\}\) with the value \(g(P^*_j\setminus\{e\})\) that
the leximin objective credits to \(j\) after the deletion, and without a
common scale these two numbers are incomparable.
\end{remark}

\begin{corollary}[Common-weight matroid rank]\label{cor:common-weight-matroid}
Let each agent \(i\) have a matroid \(\M_i=(M,\I_i)\) on the common ground set
\(M\).  All agents share nonnegative item weights \(w:M\to\RR_{\ge0}\), and
\[
  v_i(S)=\max\{w(I):I\subseteq S,\ I\in\I_i\}.
\]
Then an \EFonePO{} allocation exists, and can be chosen to be leximin-optimal.
\end{corollary}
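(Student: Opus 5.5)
The plan is to derive the corollary directly from \Cref{thm:common-envelope}. Its hypotheses are a monotone envelope \(g\), downward-closed certificate families \(\cC_i\), and submodularity of each induced \(v_i\). We take the envelope to be the common additive weight function,
\[
  g(T)=w(T)=\sum_{e\in T}w(e),
\]
and each agent's certificate family to be her independent sets, \(\cC_i=\I_i\).

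We then check the hypotheses one at a time.
\begin{itemize}
\item Since \(w\ge 0\), the set function \(g\) is normalized, \(g(\emptyset)=0\), and monotone.
\item Each \(\I_i\) is downward closed, because this is exactly the hereditary axiom of a matroid.
\item With these choices, the formula \(v_i(S)=\max\{g(T):T\subseteq S,\ T\in\cC_i\}\) reproduces the weighted matroid-rank valuation \(\max\{w(I):I\subseteq S,\ I\in\I_i\}\) verbatim.
\end{itemize}

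The only substantive hypothesis is submodularity of each \(v_i\). The paper's preliminaries note that weighted matroid-rank valuations are gross substitutes, and that gross-substitutes valuations are submodular. This gives submodularity immediately. For a self-contained argument, one could instead use the greedy characterization of maximum-weight independent sets. Alternatively, one can write \(v_i(S)\) as a nonnegative combination of unweighted rank functions of the threshold sets \(\{e\in S: w(e)\ge t\}\), namely \(v_i(S)=\int_0^\infty r_i(S\cap\{w\ge t\})\,dt\). Each term is a restriction of the submodular matroid rank \(r_i\) to a fixed set, and is therefore submodular. A nonnegative integral of submodular functions is submodular.

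With all hypotheses verified, \Cref{thm:common-envelope} yields an allocation that is both \EFone and \PO and can be chosen leximin-optimal, which is the claim. The only step I expect to require any care is the submodularity verification, specifically making the layer decomposition precise. This step is not a real obstacle, since the result can be cited from the stated inclusion \(\text{WMR}\subsetneq\text{GS}\subsetneq\text{Submodular}\).
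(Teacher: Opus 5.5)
Your proposal is correct and matches the paper's proof: instantiate \Cref{thm:common-envelope} with the additive envelope $g=w$ and certificate families $\cC_i=\I_i$, then invoke the standard fact that weighted matroid-rank functions are monotone submodular. Your optional threshold decomposition $v_i(S)=\int_0^\infty r_i(S\cap\{w\ge t\})\,dt$ is valid by the greedy algorithm, and it is a sound self-contained substitute for the citation the paper relies on.
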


\begin{proof}
Apply \Cref{thm:common-envelope} with the additive envelope
\(g(S)=w(S)=\sum_{e\in S}w(e)\) and certificate family \(\cC_i=\I_i\).
It remains to recall that the induced valuation is a weighted matroid-rank
function and is therefore monotone submodular \cite{schrijver2003combinatorial}.
\end{proof}

The common-envelope condition is best viewed as a way of allowing
heterogeneous feasibility constraints while keeping a common value scale.  In
particular, \Cref{cor:common-weight-matroid} includes common-weight rank
valuations for familiar matroid families such as partition, laminar, graphic,
and transversal matroids.  The transversal case can be read as an
approval-based common-weight OXS/assignment model, agents may have different
unit-demand slots or approval graphs, but every approved copy of item \(e\)
contributes the same common weight \(w(e)\). 

Another demonstration of the usefulness of \Cref{thm:common-envelope} is that it shows, for a shared submodular function \(g\) and arbitrary agent-specific sets \(E_i\subseteq M\), the valuation profile
\(
v_i(S)=g(S\cap E_i)
\)
admits an allocation that is both \EFone{} and \PO{} by taking \(\cC_i=2^{E_i}\).

\section{Approximate Pareto Optimality}\label{sec:approx}

The submodular counterexample given in \Cref{thm:submodular-ce}  rules out \EFone and $\alpha$-\PO for any \(\alpha>\frac{23}{24}\). Since every submodular valuation is subadditive, the same impossibility immediately carries over to the broader class of subadditive valuations.
This strengthens the subadditive impossibility of \citet{CKM16}. Their construction shows that \EFone{} and \PO{} need not be compatible under subadditive valuations; however, their instance still admits an allocation that is both \EFone{} and \wPO. 

In this section, we ask how far this incompatibility can be pushed. By the result of \cite{BS2026}, every subadditive instance admits an allocation that is both \EFone{} and \(\tfrac12\)-\PO{}. Thus, the best possible approximation threshold lies between \(\tfrac12\) and \(\frac{23}{24}\). The natural next question is to determine this threshold.
We first show such a sharp threshold for two agents.

\begin{theorem}[Two-agent subadditive barrier]\label{thm:subadditive-alpha}
For every \(\frac{1}{\sqrt{2}}<\alpha\le1\), there is a two-agent subadditive instance with
no allocation that is both \EFone and \(\alpha\)-Pareto optimal.
\end{theorem}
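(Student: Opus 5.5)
The plan is to mimic the structure of \Cref{thm:submodular-ce}, but with a geometric ladder of value levels \(1<\sqrt2<2\) instead of the narrow band \([3,3+6\varepsilon]\). The target is an instance in which every \EFone{} allocation has a utility vector of the form \((1,1)\), \((1,\sqrt2)\) or \((\sqrt2,1)\), up to small perturbations. Each such vector should be strictly dominated, coordinatewise by a factor close to \(\sqrt2\), by a non-\EFone{} allocation with utility vector \((\sqrt2,\sqrt2)\), \((\sqrt2,2)\) or \((2,\sqrt2)\) respectively. The ratios \(\sqrt2/1\) and \(2/\sqrt2\) coincide, which is why \(1/\sqrt2\) is the threshold.

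Subadditivity will be essentially free. I will take every valuation to vanish only on \(\emptyset\) and to take all nonempty values in \([1,2]\). Then \(v(S\cup T)\le 2\le v(S)+v(T)\) whenever \(S,T\) are both nonempty, and the inequality is trivial if one of them is empty. Monotonicity will be arranged by defining each valuation as the maximum over a few monotone ``level'' indicators.

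\textbf{Construction.} I will split the goods into a few blocks (for instance, a favorite block \(X\) for agent~1, a favorite block \(Y\) for agent~2, and possibly some shared filler goods). Values are determined by how much of each block a bundle contains.
\begin{itemize}
\item Agent~\(i\) gets value \(\sqrt2\) for any bundle containing her favorite block.
\item She gets value \(2\) for bundles containing her favorite block together with a designated extra piece.
\item She gets value \(2\) for ``large'' bundles whose complements are small.
\item Every other nonempty bundle is worth \(1\).
\end{itemize}
The value-\(2\) levels are placed so that, in any allocation where agent~\(i\) reaches \(\sqrt2\) or \(2\), the other agent's bundle minus any single good still contains a certificate worth strictly more to her. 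This makes such allocations fail \EFone{}, exactly as the unbalanced splits do in \Cref{thm:submodular-ce}. The \(\varepsilon\)-terms then shift the levels slightly, e.g.\ \(1\) to \(1+\delta\), to break ties so that the dominating allocations give strict improvements. The resulting factor is \(\sqrt2-O(\delta)\), which exceeds \(1/\alpha\) for any fixed \(\alpha>1/\sqrt2\) once \(\delta\) is small.

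\textbf{Steps, in order.}
\begin{enumerate}
\item Fix the block sizes and the level structure, and verify monotonicity and subadditivity, using the \([1,2]\) observation above.
\item Classify all allocations up to block counts, as in the \(E/1/2\) grid of \Cref{thm:submodular-ce}, and determine exactly which are \EFone{}.
\item Show that every \EFone{} allocation has utility vector in \(\{(1,1),(1,\sqrt2),(\sqrt2,1)\}\) up to \(O(\delta)\).
\item For each \EFone{} allocation, exhibit an explicit dominating allocation and compute the minimum coordinatewise ratio.
\end{enumerate}

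\textbf{Main obstacle.} The hard step is step~3 together with the tension it creates. The dominating allocations must be non-\EFone{}, so the value-\(2\) certificates must be plentiful enough that gaining a lot forces envy. Yet they must not be so plentiful that some \EFone{} allocation gives one agent value \(2\). Such an allocation could only be dominated by a vector exceeding \(2\) in that coordinate, which is impossible because values are capped at \(2\).

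I would first try a symmetric design in which reaching value \(2\) requires taking a good from the other agent's favorite block. In that design, giving an agent \(2\) leaves the other agent with a broken favorite block, hence value \(1\), while she still envies beyond one good. If symmetric designs fail, I would run a small computer search over count-based value tables with levels in \(\{1,\sqrt2,2\}\), as the paper's AI disclosure suggests was done for \Cref{thm:submodular-ce}.
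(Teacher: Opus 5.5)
Your outline has the right skeleton. Your ladder \(1<\sqrt2<2\) is exactly the paper's \(\tfrac12<r<1\) (with \(r=1/\sqrt2\)) scaled by \(2\). The target pattern also matches: \EFone{} allocations near \((1,\sqrt2)\) or \((\sqrt2,1)\), dominated by allocations near \((\sqrt2,2)\) or \((2,\sqrt2)\). Your subadditivity argument, that every nonempty value lies within a factor \(2\) of every other, is the one the paper uses. But the substance of this theorem is an explicit instance together with an exact determination of its \EFone{} allocations, and you never supply one. Steps 2--4 are stated as desiderata, the ``main obstacle'' is identified but not resolved, and the fallback is a computer search. As written, this is a research plan rather than a proof.

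The one concrete design you suggest also cannot work. Under your level scheme, an agent has value \(\sqrt2\) only if her bundle contains her whole favorite block. So in a dominator with vector \((\sqrt2,2)\), agent~1 holds all of \(X\); but in your symmetric design agent~2 can reach \(2\) only by holding a good of \(X\). Hence the dominators \((\sqrt2,2)\) and \((2,\sqrt2)\) are infeasible there. An \EFone{} allocation of type \((1,\sqrt2)\) could then only be improved by a factor near \(\sqrt2\) via a \((2,2)\) allocation, which would itself be envy-free and unimprovable, destroying the construction.

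More fundamentally, you make the dominators fail \EFone{} through value-\(2\) certificates that survive every single deletion, and that is exactly what creates the tension you worry about. The paper uses three \(A\)-goods and five \(B\)-goods, with explicit count tables over \(\{\tfrac12,r,1\}\) plus \(\varepsilon\)-terms. There \EFone{} holds exactly at the splits \((0,4),(1,3),(2,2),(3,1)\), which are dominated by \((3,0)\) and \((2,3)\). These dominators fail \EFone{} only by an \(\varepsilon\)-margin at the middle level. At \((3,0)\), agent~1 has \(g_1(3,0)=r\) while \(g_1(0,4)=r+\varepsilon\). At \((2,3)\), agent~2 has \(g_2(1,2)=r\) while \(g_2(1,3)=g_2(2,2)=r+\varepsilon\). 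So the \(\varepsilon\)-terms are not merely tie-breakers that make the domination strict, as in your plan. They are what make the dominating allocations non-\EFone{}, and no top-level certificate is needed.
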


\begin{proof}
Let \(r=1/\sqrt2\) and fix \(\varepsilon>0\) sufficiently small.  Again use
three \(A\)-goods and five \(B\)-goods, and write \(x,y\) for the type counts
in a bundle.  The construction has the same shape as
\Cref{thm:submodular-ce} --- \EFone will hold exactly at the four balanced
splits, and each of those is dominated by an unbalanced one --- but the gap
is now far wider.  The tables below are built almost entirely from the
three values \(\tfrac12\), \(r\), and \(1\), with \(\varepsilon\)-terms only
breaking ties; a value that jumps from \(\tfrac12\) to \(1\) in a single
step violates diminishing returns, and it is exactly these jumps, forbidden
in the submodular world, that widen the gap from \(24/23\) to \(\sqrt2\).
In the limit \(\varepsilon\to0\), each balanced split pays the two agents
\((\tfrac12,r)\) or \((r,\tfrac12)\), while the dominating splits pay
\((r,1)\) and \((1,r)\): both coordinates improve by the factor
\(r/\tfrac12=1/r=\sqrt2\).  The two agents' valuations are \(g_1(x,y)\) and
\(g_2(x,y)\), given by the following tables:
\[
\begin{array}{c|cccccc}
g_1(x,y) & y=0 & y=1 & y=2 & y=3 & y=4 & y=5\\
\hline
x=0 & 0 & \frac12 & \frac12 & \frac12 & r+\varepsilon & r+2\varepsilon\\
x=1 & \frac12 & \frac12 & \frac12 & \frac12+\varepsilon & r+2\varepsilon & 1\\
x=2 & \frac12 & \frac12 & \frac12+\varepsilon & 1 & 1 & 1\\
x=3 & r & r & r & 1 & 1 & 1
\end{array}
\]
and
\[
\begin{array}{c|cccccc}
g_2(x,y) & y=0 & y=1 & y=2 & y=3 & y=4 & y=5\\
\hline
x=0 & 0 & \frac12 & \frac12 & \frac12 & \frac12+\varepsilon & 1\\
x=1 & \frac12 & \frac12 & r & r+\varepsilon & r+2\varepsilon & 1\\
x=2 & \frac12 & \frac12 & r+\varepsilon & r+2\varepsilon & 1 & 1\\
x=3 & \frac12 & \frac12+\varepsilon & r+2\varepsilon & 1 & 1 & 1.
\end{array}
\]
These tables are normalized and monotone for sufficiently small \(\varepsilon\).
Subadditivity reduces, for typed valuations, to the finite capped-union
inequalities
\[
  g_i(\min\{3,x+x'\},\min\{5,y+y'\})
  \le g_i(x,y)+g_i(x',y')
\]
for \(0\le x,x'\le3\) and \(0\le y,y'\le5\).  If one summand is empty the
two sides agree, so assume both summands are nonempty.  At \(\varepsilon=0\)
every nonempty cell is worth at least \(\tfrac12\) and no cell exceeds
\(1\), so the two summands already cover the left side.  It remains to
handle the \(\varepsilon\)-terms.  Where the comparison at \(\varepsilon=0\)
is strict, the slack is at least \(r-\tfrac12\), far larger than any
\(\varepsilon\)-term; where it is tight, the left side is a cell of value
exactly \(1\) and both summands sit at \(\tfrac12\) --- and since every
\(1\)-cell has \(\varepsilon\)-coefficient \(0\) in both tables while the
right side can only gain, the inequality survives.  Hence all the
inequalities hold for all sufficiently small \(\varepsilon>0\).

Checking each split $(x,y)$ and verifying the \EFone{} inequalities for both agents shows that the \EFone{} splits are precisely
\[
  (0,4),(1,3),(2,2),(3,1).
\]
They are dominated as follows, where the displayed utility pairs are the limits
as \(\varepsilon\to0\):
\[
\begin{array}{c|c|c|c}
\text{\EFone split} & \text{limit utility} & \text{dominator} &
\text{dominator limit utility}\\
\hline
(1,3) & (\frac12,r) & (3,0) & (r,1)\\
(2,2) & (\frac12,r) & (3,0) & (r,1)\\
(0,4) & (r,\frac12) & (2,3) & (1,r)\\
(3,1) & (r,\frac12) & (2,3) & (1,r).
\end{array}
\]
The limiting improvement factor in every row is
\[
  \min\left\{\frac{r}{1/2},\frac{1}{r}\right\}=\sqrt2.
\]
Hence, for every \(\eta>0\), choosing \(\varepsilon\) sufficiently small makes
every \EFone allocation strictly improvable for both agents by a factor at
least \(\sqrt2-\eta\).  Given \(1/\sqrt2<\alpha\le1\), choose \(\eta>0\) with
\(\sqrt2-\eta>1/\alpha\).  Then no \EFone allocation is \(\alpha\)-Pareto
optimal.
\end{proof}

Next, we show that the same incompatibility obstruction can be extended to any number of agents. Since such obstructions can become brittle when additional agents are introduced, this extension requires a considerably more delicate argument.

\begin{theorem}[Subadditive barrier for three or more agents]\label{thm:subadditive-alpha-alln}
For every \(n\ge3\) and every \( \frac{1}{\sqrt2}<\alpha\le1\), there is an \(n\)-agent
monotone subadditive instance with no allocation that is both \EFone and
\(\alpha\)-\PO.
\end{theorem}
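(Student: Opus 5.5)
The plan is to reduce to small gadgets and glue them together. Write \(n=2a+3b\) with \(a,b\ge 0\); this is possible for every \(n\ge 2\), and since \(n\ge3\) we may take \(b\in\{0,1\}\). We would use \(a\) disjoint copies of the two-agent instance of \Cref{thm:subadditive-alpha}, plus (if \(n\) is odd) one three-agent gadget built the same way. Padding with private goods, as in the earlier remark, cannot work. The reason is that \(\alpha\)-\PO{} only forbids allocations that improve \emph{every} agent by a factor larger than \(1/\alpha\), and a saturated dummy agent can never be improved. So every agent must sit in a gadget where \EFone{} traps her at roughly \(\tfrac12\) or \(r=1/\sqrt2\) of a value she attains in a dominating configuration.

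\textbf{Step 1 (gluing).} Each agent values goods outside her own gadget at zero marginal value on top of anything. Concretely, \(v_i(S)=g_i(S\cap M_{G(i)})\), where \(G(i)\) is \(i\)'s gadget and \(g_i\) is the gadget valuation. Subadditivity and monotonicity are inherited. The point of the gluing is a uniform domination. Fix any allocation \(X\) that is \EFone{} in the glued instance. For each gadget, we would show that the restriction of \(X\) to the gadget's own goods, redistributed among the gadget's own agents, is dominated by a factor close to \(\sqrt2\) for all of them simultaneously. Goods of gadget \(G\) held by outsiders only lower what \(G\)'s agents hold, so they do not help those agents. We would then apply the gadget-level statement, phrased for every allocation of the gadget's goods among its agents \emph{plus a sink}:
\begin{itemize}
\item either the gadget allocation is \EFone{} in the gadget sense and is dominated by a factor at least \(\sqrt2-\eta\);
\item or some gadget agent violates \EFone{} towards another gadget agent or towards the sink.
\end{itemize}
For the two-agent gadget this means re-checking the tables of \Cref{thm:subadditive-alpha} with a sink holding some goods. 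Because the valuations are typed, this is a finite check. We expect that giving goods to the sink only creates more \EFone{} violations, since the sink's bundle is valued by gadget agents like any other bundle. Finally, the dominating allocations of different gadgets use disjoint goods, so they combine into one allocation improving all \(n\) agents by a factor at least \(\sqrt2-\eta>1/\alpha\).

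\textbf{Step 2 (the three-agent gadget).} This is the main obstacle. We would mimic the typed design: a few goods types, with values drawn from \(\{0,\tfrac12,r,1\}\) and \(\varepsilon\)-tie-breakers. The values are arranged so that the \EFone{} allocations are exactly the balanced ones, in which every agent has value \(\tfrac12\) or \(r\) (up to \(\varepsilon\)). For each balanced allocation there should be an unbalanced one giving every agent \(r\) or \(1\) respectively, that is, a cyclic shift pattern in which each agent's value is multiplied by \(\sqrt2\). Subadditivity would be certified exactly as before: every nonempty cell is at least \(\tfrac12\) and no cell exceeds \(1\), and every \(1\)-cell carries no \(\varepsilon\)-coefficient. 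The first thing we would try is a rotation-symmetric construction with three goods types, one "favoured" by each agent. We would use \(\mathbb Z_3\) symmetry so that only one orbit representative of the \EFone{} allocations and its dominator need checking, with a computer search over small type counts (as the authors did for the eight-good instances) to fill the tables. If the three-agent gadget resists, the fallback is a three-agent gadget that embeds the two-agent tables for agents \(1,2\) and gives agent \(3\) goods that agents \(1,2\) value only at the \(\tfrac12\) floor. Agent \(3\) would then reach \(1\) only by taking a good used in agent \(1\)'s or \(2\)'s dominating bundle, compensated by a cycle.

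\textbf{Step 3 (verification).} The remaining work is routine. We enumerate the typed splits, confirm the \EFone{} set including sink/cross-gadget bundles, and compute the limiting improvement factor \(\min\{r/\tfrac12,\,1/r\}=\sqrt2\). Then, given \(\alpha>1/\sqrt2\), we choose \(\varepsilon\) and \(\eta\) exactly as at the end of the proof of \Cref{thm:subadditive-alpha}.
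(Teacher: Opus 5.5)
\textbf{Gap: the three-agent gadget is never built.} Every odd $n$, including the base case $n=3$, rests on the three-agent gadget of Step~2, and you never construct it. Step~2 is a search plan (a $\mathbb{Z}_3$-symmetric typed table to be found by computer) plus an unspecified fallback, so the statement is unproved for all odd $n$.

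The fallback as described also lacks the needed pressure on agent~$3$. If agents $1,2$ value agent~$3$'s goods only at the $\tfrac12$ floor, any nonempty seed bundle already meets their envy of agent~$3$, so \EFone{} puts no limit on what agent~$3$ receives. Letting agent~$3$ reach $1$ only through a seed good clashes with the embedded tables. Both dominating splits $(3,0)\mid(0,5)$ and $(2,3)\mid(1,2)$ use all eight seed goods, so the two-agent dominators are lost and you would have to redesign the gadget from scratch.

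For even $n$ your gluing does work, but not for the reason you give. Moving gadget goods to outsiders can remove \EFone{} violations between the two gadget agents; for example, agent~$2$'s full bundle can be spread over outsiders. What actually saves you is the threshold structure of the tables. For any seed counts $c_1+c_2\le(3,5)$, \EFone{} between the two gadget agents alone forces one of them to value at most $\tfrac12+\varepsilon$ and the other at most $r+2\varepsilon$.

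\textbf{The missing idea: how to trap an extra agent at value $\le r$ while a dominator gives her $1$.} The paper does this uniformly for all $n\ge3$, with no parity split.
\begin{itemize}
\item Add $m=n-2$ auxiliary agents and $mT$ auxiliary goods, with $T\ge\max\{3,m+1\}$.
\item An auxiliary agent values $z$ auxiliary goods at $\tfrac12$ for $1\le z\le T-2$, at $r$ for $z=T-1$, and at $1$ for $z\ge T$.
\item Agents $1,2$ use $\max\{g_i(x,y),h_T(z)\}$. Here $h_T$ equals $t/2$ for small $z$ and jumps to $t$ at $T-1$ goods, where $t=\tfrac12+\delta$ lies strictly between $\tfrac12+\varepsilon$ and $r$.
\end{itemize}
Suppose some auxiliary agent held $T$ goods. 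Then \EFone{} forces every other auxiliary agent to hold at least $T-1$, leaving at most $m-1\le T-2$ auxiliary goods for agents $1,2$. Each of them values the auxiliary agent's bundle minus any good at least at $t$, so both would need seed value above $t$. The seed tables make that incompatible with \EFone{} between agents $1$ and $2$. Hence every auxiliary agent stays at value $\le r$, while the dominator gives each exactly $T$ goods and value $1$.

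The choice $t\in(\tfrac12+\varepsilon,r)$ is what makes this work. At level $\tfrac12$ there is no trap. At level $r$, agents $1,2$ could no longer be improved by a factor near $\sqrt2$. The $n=3$ case of this construction is exactly the three-agent gadget your plan is missing.
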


For two agents, the bound given by \Cref{thm:subadditive-alpha} is tight.  \citet[Theorem A.2]{BS2026}
show that every two-agent subadditive instance has an \EFone allocation whose \NSW is at least \(\frac{1}{\sqrt{2}}\) times that of the optimal, and hence $\frac{1}{\sqrt{2}}$-\PO.

For \(n\geq 3\), a slight modification of the algorithm of \citet{BS2026} shows that every \(n\)-agent subadditive instance admits an \EFone{} allocation whose \NSW is at least a \(2^{-(1-1/n)}\) fraction of the optimum, and hence \( \frac{1}{2^{1-1/n}} \)-\PO{}. Specifically, in their \emph{Complete Set Growing} algorithm, modify the envy-cycle step by giving priority to items in \(U_s\) whenever agent \(s\) is a source in the envy graph. The improved approximation factor follows from the fact that at least one of the unallocated sets \(U_i\) is empty in some iteration of the envy-cycle procedure.

We suspect that the above guarantee \EFone and \( \frac{1}{2^{1-1/n}} \)-\PO{} for subadditive valuations is tight. In particular, it recovers the sharp two-agent threshold when \(n=2\).

\begin{conjecture}[Subadditive approximation threshold]\label{conj:many-agent-alpha}
For every \(n\ge2\), the best universal \(\alpha\) for which every \(n\)-agent
subadditive goods instance admits an \EFone allocation that is
\(\alpha\)-\PO is
\[
  \alpha_n=2^{-(1-1/n)}.
\]
Equivalently, every such instance should admit an \EFone allocation that is
\(\alpha_n\)-\PO, and for every \(\varepsilon>0\) there exists an instance where no allocation is simultaneously \EFone and 
\((\alpha_n+\varepsilon)\)-\PO.
\end{conjecture}

\section*{Acknowledgements}

The paper is supported by the NSF--CSIRO project on ``Fair Sequential
Collective Decision-Making'' and the ARC Laureate Project FL200100204 on
``Trustworthy AI''.


\bibliographystyle{plainnat}
\bibliography{references}

\clearpage
\appendix

\section[Deferred Proofs from Section 3]{Deferred Proofs from Section~\ref{sec:core-submodular}}\label{app:proofs-sec-core-submodular}

\subsection[Proof of Corollary 3.2]{Proof of \Cref{cor:submodular-even-wpo}}\label{app:proof-cor-submodular-even-wpo}

Write \(n=2k\).  If \(k=1\), this is exactly
\Cref{thm:submodular-ce}.  Hence assume \(k\ge2\).

We take \(k\) disjoint copies of the two-agent instance from
\Cref{thm:submodular-ce}, with \(\varepsilon=1/6\).  Copy \(p\) has goods
\[
  M^p=A_p\cup B_p,
  \qquad |A_p|=3,\qquad |B_p|=5,
\]
and two agents, denoted \((p,1)\) and \((p,2)\).  Let \(g_1,g_2\) be the two
count valuations from \Cref{thm:submodular-ce}.  Thus \(g_a(x,y)\) is the
value, for agent \(a\), of a bundle with \(x\) goods of type \(A_p\) and
\(y\) goods of type \(B_p\).

Agent \((p,a)\) values goods in her own copy \(M^p\) according to \(g_a\), and
values goods outside \(M^p\) only by a very small additive term.  Formally,
\[
  V_{p,a}(S)
  =
  g_a\bigl(|S\cap A_p|,|S\cap B_p|\bigr)
  +
  \eta |S\setminus M^p|,
  \qquad
  0<\eta<\frac{1}{48(k-1)}.
\]
There are \(8(k-1)\) goods outside \(M^p\), so all goods outside \(M^p\)
together are worth less than
\[
  8(k-1)\eta<\frac16
\]
to agent \((p,a)\).

This is the whole point of the parameter \(\eta\): a gain of \(1/6\) inside an
agent's own copy is enough to compensate for losing every good she may have
held outside her own copy.

The valuations \(V_{p,a}\) are monotone submodular.  The first term is the
monotone submodular valuation from \Cref{thm:submodular-ce}, applied only to
the goods in \(M^p\).  The second term is nonnegative and additive.  Hence their
sum is monotone submodular.

We will use the following two consequences of \Cref{thm:submodular-ce} at
\(\varepsilon=1/6\).  First, in the two-agent instance, the \EFone allocations
are exactly the splits in which both agents receive four goods.  Equivalently,
if one agent receives at least five of the eight goods, then the other agent
still envies her after any one-good deletion.  Since all table entries are
multiples of \(1/6\), every such strict envy gap is at least \(1/6\).

Second, every one of those four-good/four-good \EFone splits is strictly
dominated by another split of the same eight goods, and both agents gain at
least \(1/6\).

Now let \(X\) be any \EFone allocation.  We will construct another allocation
that strictly improves every agent.

Fix a copy \(p\).  For \(a\in\{1,2\}\), let
\[
  L_{p,a}=X_{(p,a)}\cap M^p
\]
be the part of agent \((p,a)\)'s bundle that lies in her own copy.  We first
show that neither agent in the pair can hold more than four goods from her own
copy:
\[
  |L_{p,1}|\le4
  \qquad\text{and}\qquad
  |L_{p,2}|\le4 .
\]

Suppose, for contradiction, that \(|L_{p,1}|\ge5\).  Look only at the goods of
\(M^p\), and imagine the two-agent split in which agent \(1\) receives
\(L_{p,1}\) and agent \(2\) receives all remaining goods of \(M^p\).  In this
split, agent \(1\) has at least five goods, so by the two-agent theorem agent
\(2\) still envies agent \(1\) after every one-good deletion, with gap at least
\(1/6\).

Agent \((p,2)\)'s actual own-copy bundle \(L_{p,2}\) is contained in
\(M^p\setminus L_{p,1}\).  Therefore, by monotonicity, the same comparison is
still true against \(L_{p,2}\): after deleting any one good from
\(L_{p,1}\), agent \((p,2)\) values the remaining own-copy part at least
\[
  g_2(L_{p,2})+\frac16 .
\]
If instead we delete a good outside \(M^p\) from \(X_{(p,1)}\), the own-copy
part \(L_{p,1}\) does not change, so the remaining bundle is worth at least as
much.  Hence every one-good deletion from \(X_{(p,1)}\) leaves agent \((p,2)\)
with value at least
\[
  g_2(L_{p,2})+\frac16 .
\]

But agent \((p,2)\)'s own value in \(X\) is less than this, because all goods
outside \(M^p\) together are worth less than \(1/6\):
\[
  V_{p,2}(X_{(p,2)})
  \le
  g_2(L_{p,2})+8(k-1)\eta
  <
  g_2(L_{p,2})+\frac16 .
\]
Thus no one-good deletion from \(X_{(p,1)}\) removes agent \((p,2)\)'s envy of
agent \((p,1)\).  This contradicts \EFone of \(X\).  Therefore
\(|L_{p,1}|\le4\).  The same argument, with the two agents interchanged, gives
\(|L_{p,2}|\le4\).

We now use this bound to repair copy \(p\).  The sets \(L_{p,1}\) and
\(L_{p,2}\) are disjoint subsets of \(M^p\), and each has size at most four.
Therefore we can add the remaining goods of \(M^p\) to the two agents so that
agent \((p,1)\) has a set \(D_{p,1}\supseteq L_{p,1}\), agent \((p,2)\) has a
set \(D_{p,2}\supseteq L_{p,2}\), and
\[
  D_{p,1}\cup D_{p,2}=M^p,
  \qquad
  |D_{p,1}|=|D_{p,2}|=4.
\]
This gives one of the \EFone splits of the original two-agent instance on
copy \(p\).

By \Cref{thm:submodular-ce}, this four-good/four-good split is strictly
dominated by another split of \(M^p\), say
\[
  Z_{p,1}\cup Z_{p,2}=M^p,
  \qquad
  Z_{p,1}\cap Z_{p,2}=\emptyset,
\]
such that both agents gain at least \(1/6\):
\[
  g_1(Z_{p,1})\ge g_1(D_{p,1})+\frac16,
  \qquad
  g_2(Z_{p,2})\ge g_2(D_{p,2})+\frac16.
\]
Since \(D_{p,a}\supseteq L_{p,a}\), monotonicity gives
\[
  g_a(Z_{p,a})
  \ge
  g_a(L_{p,a})+\frac16
  \qquad\text{for }a=1,2.
\]

Do this independently for every copy \(p\).  Define a new allocation \(Y\) by
giving \(Z_{p,1}\) to agent \((p,1)\) and \(Z_{p,2}\) to agent \((p,2)\), for
each \(p=1,\ldots,k\).  Since the copies \(M^1,\ldots,M^k\) are disjoint, this
allocates every good exactly once.

Fix an agent \((p,a)\).  In \(Y\), she receives only goods from her own copy, so
\[
  V_{p,a}(Y_{(p,a)})
  =
  g_a(Z_{p,a})
  \ge
  g_a(L_{p,a})+\frac16 .
\]
In \(X\), her own-copy part is \(L_{p,a}\), and all goods outside her own copy
are worth less than \(1/6\) in total.  Hence
\[
  V_{p,a}(X_{(p,a)})
  \le
  g_a(L_{p,a})+8(k-1)\eta
  <
  g_a(L_{p,a})+\frac16 .
\]
Therefore
\[
  V_{p,a}(Y_{(p,a)})
  >
  V_{p,a}(X_{(p,a)})
\]
for every \(p\) and every \(a\in\{1,2\}\).

Thus \(Y\) strictly improves every agent relative to \(X\).  Since \(X\) was an
arbitrary \EFone allocation, no \EFone allocation is weakly Pareto optimal.

\subsection[Proof of Observation 3.3]{Proof of \Cref{thm:identical-submodular}}\label{app:proof-thm-identical-submodular}

Let \(A=(A_1,...,A_n)\) be a leximin optimal allocation. Leximin optimality implies
\PO, any Pareto improvement weakly increases every coordinate of the utility
vector and strictly increases at least one, so the sorted utility vector
lexicographically improves.

It remains to prove that $A$ is \EFone.  
Suppose, towards a contradiction, that \(A\) does not satisfy \EFone. 
 Then there exist agents \(i\) and \(j\) such that agent \(i\) envies agent \(j\)'s bundle even after the removal of any single good from it. This means that for every good \(g \in A_j\), we have \(v(A_j\setminus\{g\})>v(A_i)\). 

We first claim that  some good \(g'\in A_j\) has a strictly positive marginal with respect to \(A_i\). If not, by submodularity, we have
\begin{align*}
  0 &= \sum_{g\in A_j} v (A_i \cup \{g\}) -v (A_i) \\ 
    & \geq  v(A_i\cup A_j) - v(A_i). 
\end{align*}
By monotonicity, we have \(v(A_i\cup A_j) \geq v(A_i)\), and hence $v(A_i\cup A_j) = v(A_i)$. But this contradicts the assumption that agent $i$ envied agent $j$. Therefore, there must exist a good $ g'\in A_j$ with positive marginal value for agent $i$.

Transfer $g'$ from agent $j$ to agent $i$, and let $\widetilde{A}$ denote the resulting allocation:
\[\widetilde{A}_k = \begin{cases} A_i \cup \{ g' \}  & \text{if } k=i \\ 
A_j\setminus\{g'\} & \text{if } k=j \\
A_k & \text{otherwise}.
  \end{cases}\]
By the choice of $g'$, we have \( v(\widetilde{A}_i) = v(A_i   \cup \{ g' \} ) >v(A_i  )  \). Moreover, the assumed violation of \EFone gives \( v(\widetilde{A}_j) = v(A_j   \setminus \{ g' \} ) >v(A_i  )  \). The utilities of all other agents remain unchanged. However, this implies $\widetilde{A}$ lexicographically  dominates $A$, contradicting leximin optimality of $A$. 

Therefore, $A$ is both \EFone and \PO.

\subsection[Proof of Theorem 3.4]{Proof of \Cref{thm:permuted-identical}}\label{app:symmetry-certificates}
 Let the goods be
\(a,b,c,d,e,f\).  Agent 1 has valuation \(v\), and agent 2 has the same
valuation after swapping the labels \(d\) and \(e\).  The valuation \(v\) is
specified as follows.  The singleton values are
\[
\begin{array}{c|rrrrrr}
S & a&b&c&d&e&f\\ \hline
v(S)&10&9&8&10&7&6.
\end{array}
\]
The pair values are
\[
\begin{array}{c|rrrrrr}
v(\cdot) & a&b&c&d&e&f\\ \hline
a&-&17&16&20&17&14\\
b&-&-&15&14&16&15\\
c&-&-&-&14&15&14\\
d&-&-&-&-&14&14\\
e&-&-&-&-&-&13
\end{array}
\]
The triple values are
\[
\begin{array}{rrrrr}
abc:19&abd:20&abe:20&abf:19&acd:22\\
ace:19&acf:19&ade:20&adf:22&aef:19\\
bcd:18&bce:22&bcf:21&bde:18&bdf:18\\
bef:22&cde:18&cdf:18&cef:21&def:18.
\end{array}
\]
Every four-item set has value \(22\), except
\[
  v(abcf)=v(acef)=21,
  \qquad v(abde)=20.
\]
Every five-item set and the six-item set have value \(22\), and
\(v(\emptyset)=0\).
A direct marginal check shows that \(v\) is monotone submodular.

The two Pareto optimal utility pairs are
\[
\begin{array}{c|c|c}
S & M\setminus S & (v_1(S),v_2(M\setminus S))\\ \hline
ad & bcef & (20,22)\\
bcdf & ae & (22,20).
\end{array}
\]
Neither allocation is \EFone.  At \(ad\mid bcef\), agent 1 envies agent 2 even
after any one deletion from \(bcef\): the four possible deleted bundles have
values at least \(21>20\).  At \(bcdf\mid ae\), the symmetric statement holds
for agent 2 after applying the transposition.  Thus every \PO allocation fails
\EFone, proving the theorem.

\subsection{Submodularity Marginal Grids}\label{app:marginals}

For the count-table instance in \Cref{thm:submodular-ce}, the \(A\)-marginals
\(\Delta_A g(x,y)=g(x+1,y)-g(x,y)\) and \(B\)-marginals
\(\Delta_B g(x,y)=g(x,y+1)-g(x,y)\) are as follows.
\[
\begin{array}{c|rrrrrr}
\Delta_A g_1 & y=0&1&2&3&4&5\\ \hline
x=0 & 2&2&1+4\varepsilon&5\varepsilon&3\varepsilon&0\\
x=1 & \varepsilon&\varepsilon&\varepsilon&\varepsilon&0&0\\
x=2 & 0&0&0&0&0&0
\end{array}
\qquad
\begin{array}{c|rrrrr}
\Delta_B g_1 & y=0&1&2&3&4\\ \hline
x=0 & 1&1&1&3\varepsilon&3\varepsilon\\
x=1 & 1&4\varepsilon&\varepsilon&\varepsilon&0\\
x=2 & 1&4\varepsilon&\varepsilon&0&0\\
x=3 & 1&4\varepsilon&\varepsilon&0&0.
\end{array}
\]
\[
\begin{array}{c|rrrrrr}
\Delta_A g_2 & y=0&1&2&3&4&5\\ \hline
x=0 & 1+2\varepsilon&1+\varepsilon&1&3\varepsilon&\varepsilon&\varepsilon\\
x=1 & 1+\varepsilon&1-\varepsilon&3\varepsilon&3\varepsilon&0&0\\
x=2 & 1+\varepsilon&5\varepsilon&3\varepsilon&0&0&0
\end{array}
\qquad
\begin{array}{c|rrrrr}
\Delta_B g_2 & y=0&1&2&3&4\\ \hline
x=0 & 1&1&1&5\varepsilon&0\\
x=1 & 1-\varepsilon&1-\varepsilon&3\varepsilon&3\varepsilon&0\\
x=2 & 1-3\varepsilon&3\varepsilon&3\varepsilon&0&0\\
x=3 & \varepsilon&\varepsilon&0&0&0.
\end{array}
\]
Since \(0<\varepsilon\le1/6\), all entries are nonnegative and each row and
column is nonincreasing.  This proves the required diminishing returns.

\section[Deferred Proofs from Section 4]{Deferred Proofs from Section~\ref{sec:subclasses}}\label{app:proofs-sec-subclasses}

\subsection[Proof of Theorem 4.1]{Proof of \Cref{thm:oxs-fpo-ce}}\label{app:oxs-fpo-proof}


There are two agents and four goods
\( M=\{ a,b,c,d \} \), with common weights
\[
  w(a)=w(b)=1,\qquad w(c)=w(d)=3.
\]
Agent 1 has the partition matroid with blocks \(\{a,c\}\) and \(\{b,d\}\),
capacity one in each block.  Agent 2 has the partition matroid with blocks
\(\{a,d\}\) and \(\{b,c\}\), again with capacity one in each block.  Thus
\[
  v_1(S)=\max w(S\cap\{a,c\})+\max w(S\cap\{b,d\}),
\]
and
\[
  v_2(S)=\max w(S\cap\{a,d\})+\max w(S\cap\{b,c\}),
\]
where an empty maximum is \(0\).  This is common-weight weighted matroid rank.

Expanding the table of values gives
\[
\begin{array}{c|rrrrrrrrrrrrrrrr}
S&\emptyset&a&b&c&d&ab&ac&ad&bc&bd&cd&abc&abd&acd&bcd&abcd\\ \hline
v_1(S)&0&1&1&3&3&2&3&4&4&3&6&4&4&6&6&6\\
v_2(S)&0&1&1&3&3&2&4&3&3&4&6&4&4&6&6&6
\end{array}
\]
Write an allocation by the bundle \(A\) given to agent 1, with agent 2 receiving
\(M\setminus A\).  The utility vectors are
\[
\begin{array}{c|rrrrrrrr}
A&\emptyset&a&b&c&d&ab&ac&ad\\ \hline
(v_1(A),v_2(M\setminus A))&(0,6)&(1,6)&(1,6)&(3,4)&(3,4)&(2,6)&(3,4)&(4,3)\\[2mm]
A&bc&bd&cd&abc&abd&acd&bcd&abcd\\ \hline
(v_1(A),v_2(M\setminus A))&(4,3)&(3,4)&(6,2)&(4,3)&(4,3)&(6,1)&(6,1)&(6,0).
\end{array}
\]
A direct EF1 check shows that the \EFone allocations are exactly those with
utility \((3,4)\) or \((4,3)\).  These allocations are ordinary \PO{}.

It remains to identify the \fPO allocations.  By \Cref{prop:fpo-welfare}, an
allocation is \fPO only if it maximizes
\(\lambda_1v_1+\lambda_2v_2\) for some \(\lambda_1,\lambda_2>0\).  Put
\(r=\lambda_1/\lambda_2\).  The objective is \(r u_1+u_2\).  If \(r<1\), the
unique maximizing utility vector is \((2,6)\), attained by
\(A=\{a,b\}\).  If \(r>1\), the unique maximizing utility vector is \((6,2)\),
attained by \(A=\{c,d\}\).  If \(r=1\), these two allocations tie with total
utility \(8\), while every other utility vector has total utility at most \(7\).
Thus the only \fPO allocations are
\[
  \{a,b\}\mid\{c,d\}
  \qquad\text{and}\qquad
  \{c,d\}\mid\{a,b\}.
\]
Both fail \EFone.  In the first allocation, agent 1 has value \(2\), but values
agent 2's bundle \(\{c,d\}\) at \(6\); deleting either \(c\) or \(d\) leaves a
single good of value \(3>2\).  The second allocation fails symmetrically for
agent 2.  Hence no allocation is both \EFone and \fPO, even though \EFonePO{}
allocations exist.

The obstruction is robust as a set-function profile.  Consider any perturbation
\((\tilde v_1,\tilde v_2)\) with
\[
  \max_{i,S}|\tilde v_i(S)-v_i(S)|<\delta
\]
for some \(\delta<1/4\).  In the displayed instance, every allocation that is
not \EFone fails \EFone with margin at least \(1\): for one of the two agents,
her own value is at least \(1\) below the value she assigns to the other
agent's bundle after deleting any single good.  Under the perturbation, each
side of each such comparison changes by less than \(\delta\), so the gap
remains positive whenever \(\delta<1/2\).  Thus every \EFone allocation of the
perturbed profile must be one of the allocations whose unperturbed utility is
\((3,4)\) or \((4,3)\).

It remains to see that these allocations remain fractionally dominated.  The
two allocations \(\{a,b\}\mid\{c,d\}\) and
\(\{c,d\}\mid\{a,b\}\) have unperturbed utilities \((2,6)\) and \((6,2)\).
If an \EFone allocation has utility \((3,4)\), then the lottery putting weights
\(5/8\) and \(3/8\) on these two allocations gives utility
\((7/2,9/2)\), improving both agents by \(1/2\).  If an \EFone allocation has
utility \((4,3)\), the lottery with weights \(3/8\) and \(5/8\) gives
\((9/2,7/2)\), again improving both agents by \(1/2\).  After perturbation,
the value of the fixed lottery for each agent changes by less than \(\delta\),
and the value of the incumbent allocation also changes by less than
\(\delta\).  The domination slack is therefore still at least
\(1/2-2\delta>0\).  Hence every perturbed \EFone allocation is dominated by a
lottery, so no perturbed allocation is \fPO.  Taking any
\(\delta_0<1/4\) proves the robustness claim.

\subsection[Proof of Theorem 4.2]{Proof of \Cref{thm:mixed-fpo-ce}}\label{app:mixed-fpo-proof}
Recall that, in the presence of both goods and chores,  \EFone{} is defined as follows. For every ordered pair \(i,j\),  envy that 
\(i\) has towards \(j\) can be eliminated by deleting one item either from \(j\)'s bundle or
from \(i\)'s own bundle.

For \(\varepsilon>0\), consider three agents \(a_1,a_2,a_3\) and four labelled
items
\[
  M=\{g,c_1,c_2,C\},
\]
with additive singleton values
\[
\begin{array}{c|rrrr}
 & g & c_1 & c_2 & C\\ \hline
a_1 & 3+\varepsilon & -1 & -1-\varepsilon & -4+\varepsilon\\
a_2 & 5+\varepsilon & -2 & -2-2\varepsilon & -5+\varepsilon\\
a_3 & 5-\varepsilon & -2-\varepsilon & -2+\varepsilon & -5.
\end{array}
\tag{1}
\]
We denote this profile by \(v^\varepsilon\).  For
\(0<\varepsilon<1/10\), all twelve displayed singleton values are distinct.

We first study the limiting profile \(v^0\), obtained by setting
\(\varepsilon=0\):
\[
\begin{array}{c|rrrr}
 & g & c_1 & c_2 & C\\ \hline
a_1 & 3 & -1 & -1 & -4\\
a_2 & 5 & -2 & -2 & -5\\
a_3 & 5 & -2 & -2 & -5.
\end{array}
\tag{2}
\]

Let \(h\) be the holder of \(g\).  Suppose that another agent \(r\) holds
\(C\).  Even when both \(c\)-chores are assigned to \(h\), the envy gap of
\(r\) toward \(h\) is at least
\[
  v^0_r(g)+|v^0_r(C)|-2|v^0_r(c_1)|.
\]
This lower bound equals \(5\) when \(r=a_1\) and \(6\) when
\(r\in\{a_2,a_3\}\).  Deleting a single item can reduce the gap by at most
\(4\) for \(a_1\) and by at most \(5\) for \(a_2\) or \(a_3\).  Thus the envy
persists with a margin of at least \(1\).  Consequently, every \EFone{}
allocation assigns \(g\) and \(C\) to the same agent.

Suppose next that \(h\) holds both \(g\) and \(C\), while one of the two
nonholders receives both \(c\)-chores.  Even after deleting either one of her
chores, that agent still envies the remaining agent's empty bundle, again by a
margin of at least \(1\).  Hence every \EFone{} allocation satisfies
\[
  \text{\(g\) and \(C\) have a common holder, and neither nonholder
  receives both \(c\)-chores.}
  \tag{3}
\]
We refer to allocations satisfying~(3) as candidate allocations.

We next show that no candidate allocation can maximize a weighted-welfare
objective with a positive weight vector.  Let
\(\lambda_1,\lambda_2,\lambda_3\geq 0\) be the weights of
\(a_1,a_2,a_3\), respectively.  By additivity, if an item is assigned to an
agent in a weighted-welfare-maximizing allocation, that agent's weighted value
for the item must be at least that of every other agent.

Suppose first that \(a_1\) holds \(g\) and \(C\).  Comparing the weighted
values of these items for \(a_1\) and \(a_2\) gives
\[
  3\lambda_1\ge 5\lambda_2
  \qquad\text{and}\qquad
  -4\lambda_1\ge -5\lambda_2.
\]
Therefore
\[
  4\lambda_1\le 5\lambda_2\le 3\lambda_1,
\]
so \(\lambda_1=\lambda_2=0\).  Comparing the weighted value of \(g\) for
\(a_1\) and \(a_3\) then gives \(3\lambda_1\geq5\lambda_3\), and hence
\(\lambda_3=0\).

Suppose instead that \(a_2\) holds \(g\) and \(C\).  Comparing the weighted
values of these items for \(a_2\) and \(a_3\) gives
\[
  5\lambda_2\ge5\lambda_3
  \qquad\text{and}\qquad
  -5\lambda_2\ge-5\lambda_3,
\]
so \(\lambda_2=\lambda_3\).  Comparing the weighted value of \(g\) for
\(a_2\) and \(a_1\) gives
\[
  5\lambda_2\ge3\lambda_1.
  \tag{4}
\]
Condition~(3) ensures that at least one \(c\)-chore is held by \(a_2\) or
\(a_3\).  If \(a_2\) holds such a chore, comparison with \(a_1\) gives
\[
  -2\lambda_2\ge-\lambda_1,
\]
and hence \(\lambda_1\ge2\lambda_2\).  If \(a_3\) holds it, the same
comparison gives \(\lambda_1\ge2\lambda_3=2\lambda_2\).  In either
case,~(4) implies
\[
  5\lambda_2\ge3\lambda_1\ge6\lambda_2.
\]
Thus \(\lambda_2=0\), and consequently \(\lambda_1=\lambda_3=0\).  The case
in which \(a_3\) holds \(g\) and \(C\) is symmetric.

We have proved that no candidate allocation is supported by a positive weight welfare function.  This stronger conclusion makes the obstruction
robust.  Suppose, for contradiction, that every neighborhood of \(v^0\)
contains an additive profile admitting an allocation that is both \EFone{} and
\fPO{}.  There would then be a sequence of additive profiles \(v^k\) converging
to \(v^0\), together with allocations \(X^k\) that are \EFone{} and \fPO{} at
\(v^k\).  Because there are only finitely many allocations, we may pass to a
subsequence on which \(X^k=X\) is fixed.  The non-\EFone{} allocations ruled
out above violate \EFone{} at \(v^0\) with a strict margin.  By continuity,
none of them can be \EFone{} at all sufficiently large \(v^k\).  Thus \(X\)
must satisfy condition~(3).

By \Cref{prop:fpo-welfare}, for every \(k\) there is a strictly positive
weight vector supporting \(X\) at \(v^k\).  Normalize each vector so that its
coordinates sum to \(1\).  The unit simplex is compact, so a subsequence
converges to a positive vector.  Passing to the limit in the
itemwise weighted-welfare inequalities shows that the limiting vector supports
\(X\) at \(v^0\), contradicting the preceding argument.

It follows that there is some \(\eta>0\) such that every additive profile
\(v\) satisfying
\[
  \lVert v-v^0\rVert_\infty<\eta
  \tag{5}
\]
has no allocation that is both \EFone{} and \fPO{}.  Here and below, the norm
is taken over the singleton-value table.

We now choose
\[
  0<\varepsilon<\min\left\{\frac{\eta}{3},\frac{1}{10}\right\}.
\]
The profile in~(1) satisfies
\[
  \lVert v^\varepsilon-v^0\rVert_\infty=2\varepsilon.
\]
Consequently, every additive profile \(\tilde v\) with
\[
  \lVert\tilde v-v^\varepsilon\rVert_\infty<\varepsilon
  \tag{6}
\]
also satisfies \(\lVert\tilde v-v^0\rVert_\infty<3\varepsilon<\eta\).
Therefore no profile in the open neighborhood~(6) admits an allocation that
is both \EFone{} and \fPO{}.

It remains to show that the same neighborhood admits an \EFonePO{}
allocation.  Give every item to \(a_2\):
\[
  X_{a_1}=X_{a_3}=\emptyset,
  \qquad
  X_{a_2}=M.
  \tag{7}
\]
At \(v^\varepsilon\),
\[
  v^\varepsilon_{a_1}(M)=-3+\varepsilon,
  \qquad
  v^\varepsilon_{a_3}(M)=-4-\varepsilon,
\]
so neither \(a_1\) nor \(a_3\) envies \(a_2\).  Agent \(a_2\)'s envy of
either empty bundle is eliminated by deleting \(C\), since
\[
  v^\varepsilon_{a_2}(M\setminus\{C\})=1-\varepsilon>0.
\]
For every \(\tilde v\) in the neighborhood~(6),
\[
  \tilde v_{a_1}(M)<-3+5\varepsilon<0,
  \qquad
  \tilde v_{a_3}(M)<-4+3\varepsilon<0,
\]
and
\[
  \tilde v_{a_2}(M\setminus\{C\})>1-4\varepsilon>0.
\]
Thus \(X\) remains \EFone{} throughout the neighborhood.

Finally, suppose that another allocation Pareto-dominates \(X\).  Since
\(c_1,c_2,C\) remain chores for \(a_1\) and \(a_3\), any nonempty bundle
assigned to either agent must contain \(g\).  As there is only one good, a
Pareto improvement must transfer a set \(S\ni g\) from \(a_2\) to exactly one
agent \(r\in\{a_1,a_3\}\).  By additivity, this would require
\[
  \tilde v_{a_2}(S)\le0
  \qquad\text{and}\qquad
  \tilde v_r(S)\ge0.
  \tag{8}
\]

At \(v^\varepsilon\), the only sets \(S\ni g\) that \(a_2\) values
nonpositively are
\[
  \{g,c_1,C\},\qquad
  \{g,c_2,C\},\qquad
  M.
  \tag{9}
\]
The only tight case outside this list is
\[
  v^\varepsilon_{a_2}(\{g,C\})=2\varepsilon.
\]
Because \(\{g,C\}\) contains two items, a perturbation satisfying~(6) changes
its value by strictly less than \(2\varepsilon\), so it remains positive.
Every other set \(S\ni g\) outside~(9) has value at least
\(1-\varepsilon\) at \(v^\varepsilon\) and contains at most three items.
Hence its value after perturbation is greater than
\(1-4\varepsilon>0\).

For each set in~(9), both possible recipients value the set strictly
negatively.  Indeed, the largest such value at \(v^\varepsilon\) is
\[
  -2+2\varepsilon<0.
\]
Each set contains at most four items, so throughout~(6) all these values are
strictly less than \(-2+6\varepsilon<0\).  Hence no set \(S\) can satisfy both
inequalities in~(8), and \(X\) is Pareto optimal.

Thus every profile in the open neighborhood~(6) has no allocation that is
both \EFone{} and \fPO{}, while the fixed allocation~(7) remains
\EFonePO{}.  Taking \(\delta_0=\varepsilon\), the profile \(v^\varepsilon\)
proves \Cref{thm:mixed-fpo-ce}.

\section[Deferred Proofs from Section 5]{Deferred Proofs from Section~\ref{sec:approx}}\label{app:proofs-sec-approx}

\subsection[Proof of Theorem 5.2]{Proof of \Cref{thm:subadditive-alpha-alln}}\label{app:proof-thm-subadditive-alpha-alln}

Let \(r=1/\sqrt2\), fix \(n\ge3\), and fix
\(\alpha\in(r,1]\).  The proof is an extension of the two-agent construction
from \Cref{thm:subadditive-alpha}.  Agents \(1\) and \(2\) will play the role
of the two original agents.  The remaining agents will value only auxiliary
goods.  The auxiliary goods are added in a way that creates the following
tension: an \EFone allocation cannot give too many auxiliary goods to the
auxiliary agents, but the efficient allocation will give exactly that many
auxiliary goods to them.

Choose \(0<\varepsilon\) so that the seed valuations \(g_1,g_2\) from
\Cref{thm:subadditive-alpha} are subadditive.  Then choose \(\delta>0\) such
that
\[
  0<4\varepsilon<\delta<r-\frac12
\]
and
\[
  \rho
  :=
  \min\left\{
    \frac{r}{\frac12+\delta},
    \frac{1}{r+2\varepsilon},
    \frac1r
  \right\}
  >
  \frac1\alpha .
\]
This is possible because the displayed minimum tends to
\(\sqrt2=1/r\) as \(\varepsilon,\delta\to0\), while
\(\alpha>r=1/\sqrt2\).  Set
\[
  t=\frac12+\delta .
\]
Then
\[
  \frac12+\varepsilon<t<r .
\]
Thus \(t\) sits between the low seed values and the value \(r\).  In
particular, in the seed tables of \Cref{thm:subadditive-alpha}, a seed value is
above \(t\) exactly when it is at least \(r\).

Let \(m=n-2\), and add \(m\) auxiliary agents, namely agents
\(3,\ldots,n\).  Start with the eight seed goods \(A^3\cup B^5\) from
\Cref{thm:subadditive-alpha}.  Add a set \(C\) of \(mT\) auxiliary goods,
where
\[
  T\ge \max\{3,m+1\}.
\]
For a bundle \(S\), write
\[
  x(S)=|S\cap A|,
  \qquad
  y(S)=|S\cap B|,
  \qquad
  z(S)=|S\cap C|.
\]
We call \((x(S),y(S))\) the seed count vector of \(S\).  If
\(c=(x,y)\), we write \(g_i(c)\) for \(g_i(x,y)\).  A one-good seed deletion
from \(c\) means one of the available vectors
\[
  (x-1,y)\quad\text{if }x>0,
  \qquad
  (x,y-1)\quad\text{if }y>0.
\]

Define
\[
h_T(0)=0,\qquad
h_T(z)=\frac t2\ \text{ for }1\le z\le T-2,\qquad
h_T(z)=t\ \text{ for }z\ge T-1,
\]
and
\[
q_T(0)=0,\qquad
q_T(z)=\frac12\ \text{ for }1\le z\le T-2,\qquad
q_T(T-1)=r,\qquad
q_T(z)=1\ \text{ for }z\ge T.
\]
Agents \(1\) and \(2\) have valuations
\[
  v_1(S)=\max\{g_1(x(S),y(S)),h_T(z(S))\},
  \qquad
  v_2(S)=\max\{g_2(x(S),y(S)),h_T(z(S))\}.
\]
Each auxiliary agent \(k=3,\ldots,n\) has valuation
\[
  v_k(S)=q_T(z(S)).
\]

The interpretation is as follows.  For agents \(1\) and \(2\), auxiliary goods
can raise a low seed value only up to \(t\), which is still below \(r\).  For an
auxiliary agent, \(T-1\) auxiliary goods are worth \(r\), while \(T\) auxiliary
goods are worth \(1\).  So the efficient outcome will want to give \(T\)
auxiliary goods to every auxiliary agent, but \EFone will prevent this.

We first check subadditivity.  The functions \(h_T\) and \(q_T\) are normalized,
monotone, and subadditive.  If \(a,b>0\), then
\[
  h_T(a)+h_T(b)\ge t\ge h_T(a+b),
  \qquad
  q_T(a)+q_T(b)\ge 1\ge q_T(a+b),
\]
and if one of \(a,b\) is zero, the inequalities are immediate.  Since
\(g_1,g_2\) are subadditive on the seed goods, the seed parts
\[
  S\mapsto g_i(x(S),y(S))
\]
are subadditive.  The maximum of two nonnegative subadditive functions is
subadditive, because
\[
  \max\{a+a',b+b'\}
  \le
  \max\{a,b\}+\max\{a',b'\}.
\]
Therefore \(v_1\) and \(v_2\) are subadditive.  The auxiliary valuations
\(v_k=q_T\circ z\) are subadditive by subadditivity of \(q_T\).  Hence all
valuations are monotone and subadditive.

We will use two simple facts about the seed instance.  These are just the
two-agent tables from \Cref{thm:subadditive-alpha} written in threshold form.

First, since \(t\) lies between \(\frac12+\varepsilon\) and \(r\), the cells
above \(t\) are described by
\[
  g_1(x,y)>t
  \quad\Longleftrightarrow\quad
  x=3\ \text{ or }\ y\ge4\ \text{ or }\ (x\ge2\text{ and }y\ge3),
\]
and
\[
  g_2(x,y)>t
  \quad\Longleftrightarrow\quad
  y=5\ \text{ or }\ (x\ge1\text{ and }y\ge2).
\]
Thus, if two seed count vectors \(c_1,c_2\) fit inside the seed goods, meaning
\[
  c_1+c_2\le(3,5)
\]
coordinatewise, and
\[
  g_1(c_1)>t,
  \qquad
  g_2(c_2)>t,
\]
then there are only two possibilities:
\[
  (c_1,c_2)=((3,0),(0,5))
  \quad\text{or}\quad
  (c_1,c_2)=((2,3),(1,2)).
\]
Indeed, if \(c_2\) has \(y_2=5\), then \(c_1\) has \(y_1=0\), so the first
threshold rule forces \(x_1=3\), giving \(((3,0),(0,5))\).  Otherwise the
second threshold rule gives \(x_2\ge1\) and \(y_2\ge2\).  Then
\(x_1\le2\) and \(y_1\le3\), so the first threshold rule forces
\(c_1=(2,3)\), and hence \(c_2=(1,2)\).

Both of these possibilities are incompatible with \EFone between agents
\(1\) and \(2\), even if auxiliary goods are also present.  For
\(((3,0),(0,5))\), agent \(1\)'s own value is
\[
  \max\{g_1(3,0),h_T(z_1)\}=r,
\]
since \(h_T\le t<r\).  After any one-good deletion from agent \(2\)'s bundle,
the remaining seed count is either \((0,5)\) or \((0,4)\), and agent \(1\)
values both at least \(r+\varepsilon\).  Hence agent \(1\) still envies agent
\(2\) after every one-good deletion.

For \(((2,3),(1,2))\), agent \(2\)'s own value is
\[
  \max\{g_2(1,2),h_T(z_2)\}=r.
\]
After any one-good deletion from agent \(1\)'s bundle, the remaining seed count
is one of
\[
  (2,3),\qquad (1,3),\qquad (2,2),
\]
and agent \(2\) values each of these at least \(r+\varepsilon\).  Hence agent
\(2\) still envies agent \(1\) after every one-good deletion.

Second, the top seed cells of one agent are surrounded, in the other agent's
table, by cells of value at least \(r\).  More precisely:
\[
  g_2(c)>r+2\varepsilon
  \quad\Longrightarrow\quad
  g_1(c)\ge r
  \ \text{and every one-good seed deletion of }c\text{ has }g_1\text{-value at least }r,
\]
and symmetrically,
\[
  g_1(c)>r+2\varepsilon
  \quad\Longrightarrow\quad
  g_2(c)\ge r
  \ \text{and every one-good seed deletion of }c\text{ has }g_2\text{-value at least }r.
\]
Here is the short verification.  The condition \(g_2(c)>r+2\varepsilon\)
means that \(g_2(c)=1\).  In the \(g_2\)-table, this occurs only when
\[
  y=5,
  \qquad
  \text{or } x\ge2,\ y\ge4,
  \qquad
  \text{or } x=3,\ y\ge3.
\]
In each case, \(c\) and every available predecessor satisfy the threshold rule
for \(g_1\ge r\) above.  The symmetric statement is the same: the cells with
\(g_1(c)=1\) are
\[
  x=1,\ y=5,
  \qquad
  \text{or } x\ge2,\ y\ge3,
\]
and \(c\) and every available predecessor satisfy the threshold rule for
\(g_2\ge r\).

Now let \(X\) be any \EFone allocation.  We will show that \(X\) is not
\(\alpha\)-Pareto optimal.

For agents \(1\) and \(2\), write
\[
  c_i=(x(X_i),y(X_i)),
  \qquad
  z_i=z(X_i).
\]
For every auxiliary agent \(k\ge3\), write \(z_k=z(X_k)\).

We first prove that no auxiliary agent can receive \(T\) or more auxiliary
goods in an \EFone allocation.  Suppose, for contradiction, that some auxiliary
agent \(k\) has
\[
  z_k\ge T.
\]
After deleting any one good from \(X_k\), at least \(T-1\) auxiliary goods
remain.  Hence every other auxiliary agent \(\ell\ge3\) must have value at
least
\[
  q_T(T-1)=r;
\]
otherwise \(\ell\) would still envy \(k\) after every one-good deletion.  Thus
every auxiliary agent other than \(k\) must receive at least \(T-1\) auxiliary
goods.

The auxiliary agents therefore hold at least
\[
  T+(m-1)(T-1)=mT-(m-1)
\]
auxiliary goods.  Since there are \(mT\) auxiliary goods in total, agents
\(1\) and \(2\) together hold at most \(m-1\) auxiliary goods.  Because
\(T\ge m+1\), this is at most \(T-2\).  Therefore
\[
  h_T(z_1)\le\frac t2,
  \qquad
  h_T(z_2)\le\frac t2.
\]

Now look at agent \(i\in\{1,2\}\).  After any one-good deletion from
\(X_k\), at least \(T-1\) auxiliary goods remain, so agent \(i\) values the
remaining bundle at least
\[
  h_T(T-1)=t.
\]
Since \(X\) is \EFone, agent \(i\)'s own value must be at least \(t\).  But
her auxiliary value is at most \(t/2\), so her seed value must be at least
\(t\).  No seed-table entry equals \(t\), and hence
\[
  g_i(c_i)>t
  \qquad\text{for both }i=1,2.
\]
By the first seed fact above, this forces one of the two seed configurations
that already fail \EFone between agents \(1\) and \(2\), a contradiction.
Therefore
\[
  z_k\le T-1
  \qquad\text{for every auxiliary agent }k\ge3.
\]
Consequently,
\[
  v_k(X_k)=q_T(z_k)\le r
  \qquad\text{for every }k\ge3.
\]

Next we bound the values of agents \(1\) and \(2\).  They cannot both have
value above \(t\).  Indeed, if
\[
  v_1(X_1)>t
  \qquad\text{and}\qquad
  v_2(X_2)>t,
\]
then, because \(h_T\le t\), both inequalities must come from the seed parts:
\[
  g_1(c_1)>t,
  \qquad
  g_2(c_2)>t.
\]
The first seed fact again says that \(X\) fails \EFone between agents \(1\)
and \(2\).  Hence at least one of agents \(1\) and \(2\) has value at most
\(t\).

Assume first that
\[
  v_1(X_1)\le t.
\]
We claim that then
\[
  v_2(X_2)\le r+2\varepsilon.
\]
If not, then \(v_2(X_2)>r+2\varepsilon\).  Since \(h_T\le t<r\), this must
come from the seed part:
\[
  g_2(c_2)>r+2\varepsilon.
\]
By the second seed fact, agent \(1\) values \(c_2\), and every one-good seed
deletion from \(c_2\), at least \(r\).  Therefore, after any one-good deletion
from \(X_2\), agent \(1\) still values the remaining bundle at least \(r\):
deleting an auxiliary good leaves the seed count \(c_2\) unchanged, and
deleting a seed good gives one of the seed deletions covered above.  Since
\[
  r>t\ge v_1(X_1),
\]
agent \(1\)'s envy of agent \(2\) cannot be removed by deleting one good.  This
contradicts \EFone.  Thus
\[
  v_1(X_1)\le t,
  \qquad
  v_2(X_2)\le r+2\varepsilon,
  \qquad
  v_k(X_k)\le r\quad(k\ge3).
\]

Now we use the same improving seed allocation as in the two-agent proof.  Give
all three \(A\)-goods to agent \(1\), all five \(B\)-goods to agent \(2\), and
give exactly \(T\) auxiliary goods to every auxiliary agent.  Call this
allocation \(Y\).  Then
\[
  v_1(Y_1)=g_1(3,0)=r,
  \qquad
  v_2(Y_2)=g_2(0,5)=1,
  \qquad
  v_k(Y_k)=q_T(T)=1 \quad(k\ge3).
\]
By the definition of \(\rho\),
\[
  r>\frac1\alpha\,t
    \ge \frac1\alpha\,v_1(X_1),
\]
\[
  1>\frac1\alpha\,(r+2\varepsilon)
    \ge \frac1\alpha\,v_2(X_2),
\]
and
\[
  1>\frac1\alpha\,r
    \ge \frac1\alpha\,v_k(X_k)
    \qquad(k\ge3).
\]
Thus \(Y\) improves every agent by a factor strictly larger than \(1/\alpha\).
So \(X\) is not \(\alpha\)-Pareto optimal.

The other case is symmetric.  If
\[
  v_2(X_2)\le t,
\]
then the same argument gives
\[
  v_1(X_1)\le r+2\varepsilon,
  \qquad
  v_k(X_k)\le r\quad(k\ge3).
\]
Now use the other improving seed allocation from the two-agent proof: give seed
counts \((2,3)\) to agent \(1\), seed counts \((1,2)\) to agent \(2\), and
give exactly \(T\) auxiliary goods to every auxiliary agent.  Call this
allocation \(Y'\).  Then
\[
  v_1(Y'_1)=g_1(2,3)=1,
  \qquad
  v_2(Y'_2)=g_2(1,2)=r,
  \qquad
  v_k(Y'_k)=q_T(T)=1 \quad(k\ge3).
\]
Again, by the definition of \(\rho\),
\[
  1>\frac1\alpha\,(r+2\varepsilon)
    \ge \frac1\alpha\,v_1(X_1),
\]
\[
  r>\frac1\alpha\,t
    \ge \frac1\alpha\,v_2(X_2),
\]
and
\[
  1>\frac1\alpha\,r
    \ge \frac1\alpha\,v_k(X_k)
    \qquad(k\ge3).
\]
So \(Y'\) also improves every agent by a factor strictly larger than
\(1/\alpha\).

Since every \EFone allocation \(X\) falls into one of the two cases above, every
\EFone allocation is \(\alpha\)-Pareto dominated.  Hence no allocation is both
\EFone and \(\alpha\)-Pareto optimal.

\end{document}